\newtheorem{theo}{Theorem}[section]
\newtheorem{cor}[theo]{Corollary}
\newtheorem{lem}[theo]{Lemma}
\newtheorem{prop}[theo]{Proposition}
\newtheorem{defn}[theo]{Definition}
\newtheorem{rem}[theo]{Remark}
\newcommand{\acknowledgement}[1]{\par\addvspace\baselineskip\noindent\textbf{\small{Acknowledgement.}}\enspace\ignorespaces\small#1}
\newcommand{\X}{\mathcal{X}}
\newcommand{\Y}{\mathcal{Y}}
\newcommand{\R}{\mathbb{R}}
\newcommand{\keywords}[1]{\par\addvspace\baselineskip\noindent\textbf{Keywords:}\enspace\ignorespaces#1}
\newcommand{\AMSclassification}[1]{\par\addvspace\baselineskip\noindent\textbf{Mathematical subject classification:}\enspace\ignorespaces#1}
\title{A nonsmooth two-sex population model}
\author{
\small{Eduardo Garibaldi}\\
\footnotesize{UNICAMP -- Department of Mathematics}\\
\footnotesize{13083-859 Campinas - SP, Brazil}\\
\footnotesize{\texttt{garibaldi@ime.unicamp.br}}
\and
\small{Marcelo Sobottka}\\
\footnotesize{UFSC -- Department of Mathematics}\\
\footnotesize{88040-900 Florian\'opolis - SC, Brazil}\\
\footnotesize{\texttt{sobottka@mtm.ufsc.br}}
}
\date{}
\begin{document}

\maketitle

\begin{abstract}
This paper considers a two-dimensional logistic model to study populations with two genders. The growth behavior of a population is guided by two coupled ordinary differential equations given by a non-differentiable vector field whose parameters are the secondary sex ratio (the ratio of males to females at time of birth), inter-, intra- and outer-gender competitions, fertility and mortality rates and a mating function. For the case where there is no inter-gender competition and the mortality rates are negligible with respect to the density-dependent mortality, using geometrical techniques, we analyze the singularities and the basin of attraction of the system, determining the relationships between the parameters for which the system presents an equilibrium point. In particular, we describe conditions on the secondary sex ratio and discuss the role of the average number of female sexual partners of each male for the conservation of a two-sex species.

\keywords{population dynamics, two-sex models, nonsmooth ordinary differential equations}

\AMSclassification{34C60, 37C10, 37N25, 92D25}
\end{abstract}

\bigskip
\hrule
\noindent
{\footnotesize\em This is a pre-copy-editing, author-produced preprint of an article accepted for publication in Mathematical Bioscience. The definitive publisher-authenticated version Eduardo Garibaldi and Marcelo Sobottka, A nonsmooth two-sex population model, Mathematical Bioscience, Volume 253, 15 July 2014, Pages 1-10, ISSN 0025-5564, is available online at: http://dx.doi.org/10.1016/j.mbs.2014.03.015 or\break
http://www.sciencedirect.com/science/article/pii/S0025556414000728 .}
\hrule
\bigskip

\section{Introduction}

When studying biological populations in nature, it is usual to recognize an unvarying proportion of the genders in a stable environment.
Such a prevalent observation has been a remarkable motivation for fundamental contributions in the theory of sex-structured populations.
Fisher's comprehension \cite{Fisher} of the commonness of nearly 1:1 sex ratios, Hamilton's explanation \cite{Hamilton} for the existence of
biased sex ratios, Trivers-Willard hypothesis \cite{TriversWillard} on the parental capability to adjust the sex ratio of offsprings as a response
to environmental changes and Charnov mathematical proposal \cite{Charnov} for sex allocations are some relevant examples of this kind of legacy.

In a previous work \cite{Garibaldi-Sobottka}, we have developed a dynamic-programming model in order to discuss whether the identification
of a stable sex ratio in nature might mirror a population maintenance cost under finite resources. Here we propose another
dynamical approach to study sex-structured populations which consists in modeling the time evolution of two-sex populations with
differential equations. Under this point of view, the interactions of the individuals are represented as a mean tendency of the
whole population. Furthermore, instead of looking for a sex ratio that would maximize the efficiency of individuals in the use of available resources, in the population-dynamics formulation, secondary sex ratio is actually one of the parameters of the system. In such a case, the aim is thus,
for suitable mating functions, to describe and classify the behavior of the population for distinct progeny sex ratios and distinct
mortality sex ratios \cite{Fredrickson,Goodman,Hoppensteadt,Ranta_et_Al,Rosen83,YangMilner,YellinSamuelson}.
For instance, it has been argued in \cite{Ranta_et_Al,YellinSamuelson} that the marriage rate plays an important role in the stability of the population, since polygamy would amplify the sensibility of the system to the variation of the other parameters. In another direction, a model
with stable solutions for monogamous and polygamous populations was presented in \cite{Rosen83}.

In this paper, we propose a nonsmooth two-sex logistic model (which may be seen as an extension of previous formulations) and we use the qualitative-geometric theory of ordinary differential equations to study it. Considering sex-ratio dependent competition terms, we obtain sufficient and necessary conditions for the persistence of the population. In particular, we show that the dynamical behavior of the population is governed by a highly nonlinear relationship between the secondary sex ratio and the competition parameters, and that the average number of male's reproductive partners is an important parameter that may allow a two-sex species to find a stable equilibrium.

The paper is organized as follows. In section \ref{sec.model}, we recall some classical models for two-sex populations and we
define the model that will be studied. In section \ref{sec.singularities}, we detail its singularities by analyzing two vector fields defined on the plane and naturally associated to the original one. In section \ref{sec.sex_ratios},
we study the relationships between secondary and tertiary sex ratios and the competition parameters of the model. In section \ref{sec.behavior}, we
describe the local and global behavior of the two associated vector fields. In particular, we point out conditions on the secondary sex ratio that assure
the existence of asymptotically stable singularities and the nonexistence of cycles. Hence, we discuss the local and global dynamics for the original vector
field. In section \ref{sec.final}, we outline open questions about the dynamics of the model and some possible extensions.

\section{The model}\label{sec.model}

We consider here a two-sex logistic model which follows the basic lines of the classical logistic model: the population growth is given by the balance between the birth rate (which depends on the quantity of individuals in the population) and the death rate (which depends on square of the quantity of individuals, representing the interactions between them).

Non-logistic models for two-sex populations have been proposed at least since the 1940's (for a review see \cite{HWW}). For instance, Kendall (\cite{Kendall49}, page 247) proposed two non-logistic models. The first one consists in a model for the behavior of male and female populations described by the following coupled ODE's:
\begin{equation}\label{kendall}\begin{array}{lcl}
\dot{x}&=&b_x F(x,y)-m_x x\\\\
\dot{y}&=&b_y F(x,y)-m_y y,
\end{array}
\end{equation}
where $x$ and $y$ denote the quantity of females and males at time $t$, respectively, $m_x$ and $m_y$ denote the mortality rates of females and males, $b_x$ and $b_y$ are independent parameters for the birth rate of each gender, and $F$ is the mating function (which was supposed to be nonnegative and symmetric in $x$ and $y$) and represents the contribution of males and females to the birth rate. In his work, Kendall studied the case where $m_x=m_y$, $b_x=b_y=1/2$ and $F$ has one of the following forms:
$$xy,\quad (xy)^{1/2},\quad x+y\quad \text{or}\quad  \min\{x,y\}.$$

The second model proposed by Kendall addresses the problem of pair formation in two-sex populations. In such a model, three coupled
ODE's take into account the numbers of unmarried males, unmarried females and married couples. Once again, a central role is played by the mating function.

Following the Kendall's work, Goodman \cite{Goodman} studied the cases where $m_x \neq m_y$ and $b_x \neq b_y$ for several mating functions, including the above ones as well as $F(x,y)=x$ and $F(x,y)=y$. In \cite{Fredrickson}, Fredrickson
assumed two hypotheses on the mating function $F$: {\em heterosexuality} (that is, $F(0,y)=F(x,0)=0$) and {\em homogeneity} (in the sense that $F(kx,ky)=kF(x,y)$). Using these hypotheses, he found a general form for differentiable mating functions and deduced that they are {\em consistent}: if there is a preponderance of some gender in the population, then the birth rate will be limited by the number of individuals of the other gender.
Other natural hypothesis on $F$ is {\em monotocity} \cite{YangMilner}, namely, if $\bar{x}\geq x$ and $\bar{y}\geq y$, then $F(\bar{x},\bar{y})\geq F(x,y)$.

Logistic models for two-sex populations have been considered by the academic community \cite{ChavezHuang,Rosen83,YangMilner}. The model in \cite{ChavezHuang} incorporates nonlinear birth and separation processes to Kendall's pair-formation model, while the model studied in \cite{YangMilner} is an age-dependent two-sex model with density dependence in the birth and death. On the other hand, Rosen (\cite{Rosen83}, section 4) studied a model which admits in \eqref{kendall} terms for competition:
\begin{equation}\label{rosen}\begin{array}{lcl}
\dot{x}&=&b_x F(x,y)-(m_x x+X_x x^2+X_{xy}xy)\\\\
\dot{y}&=&b_y F(x,y)-(m_y y+Y_y y^2+Y_{xy}xy),
\end{array}
\end{equation}
where $X_x$ and $Y_y$ describe the effects of intrasexual competition of females and males, respectively, and $X_{xy}$ and $Y_{xy}$
characterize the intersexual competition of males on females and females on males, respectively. Furthermore, in \cite{Rosen83} it was considered the mating function given by
\begin{equation}\label{matingfunction}
F(x,y)= \min\{x,r y\},
\end{equation}
where $r$ is the average number of female sexual partners that each male has along each reproductive cycle ($r<1$ may be interpreted as polyandrous population, $r=1$ is understood as a monogamous population, and $r>1$ may be seen as a polygynous population).

We recall that models like~\eqref{kendall} and~\eqref{rosen} do not inspect in an explicit way certain internal mechanisms of the populations, like
pair formation or age structure. In fact, such mechanisms are captured by the parameters of the models. Consider, for example, a population of a total of $m$ males and $f$
females of which $\tilde{m}$ males are sexually active and $\tilde{f}$  females are receptive and each one of them has fertility rate $\tilde{s}$. Suppose yet that each sexually
active male successfully breeds with $\tilde{r}$ females. In such a case, these models will interpret that all the males successfully breeds with $r=(\tilde{m}\tilde{r})/m$ females and all the females are receptive (each one of them with fertility rate $s$), so the net number of individuals being born and the magnitude of competitions will be virtually the same and the models will reveal the behavior of the population growth. Notice that competitions for mating are not focused by these models, since they are part of the pair formation mechanism and in general they do not affect the mortality rate. Besides, since the parameters $ r $ and $ s $ absorb the age structure and pair formation, the sex-ratio type considered in the models is the tertiary sex ratio
(the number of adult males divided by the number of adult females -- also named adult sex ratio), which, when adopting such a point of view, is
indistinguishable from the operational sex ratio (the number of sexually active males divided by the number of receptive females).

Note that competition terms of the form $xy$ in the above equations may not capture some aspects of the relationship between the genders. In fact, although for predator-prey models it is reasonable to suppose that a great number of predators or preys will increase the probability of interactions between the species and then the population growth of both species will be affected by the quantity $xy$, this interpretation does not necessarily hold for two-sex populations, in which one of the genders is not a vital resource but in general both genders coexist and have common resources. The causes and consequences of adult sex ratio and operational sex ratio have been extensively investigated by biologists. There are pieces of evidence that the sex ratio has impact on fitness prospects of males and females and on optimal sex allocation decisions \cite{Michler_et_al11}. It was noticed that a male-biased sex ratio could amplify male-male
competition with negative impact on female survival and fecundity (see, for instance, \cite{Galliard_et_al, Grayson_et_al}).
The excess of males against females has also been pointed out as a likely negative factor for females in the human case \cite{HeskethXing}. In other words, when $y$ is much greater than $x$, even if $xy$ is small, one may detect a negative impact on the $ x $ population. These observations lead us to incorporate to equations \eqref{rosen} a mortality term for each gender which takes into account the ratio between the genders on the intersexual competitions. In other words, the female population will have a mortality term proportional to $\frac{y}{x}xy=y^2$, while the male population will have a mortality term proportional to $\frac{x}{y}xy=x^2$  (these terms can be seen as `outer-gender competition terms').

Let us denote
\begin{description}

\item[$\underline{\mathbf{1}_{\R_+^*}}$] the characteristic function of the set $\R_+^*:=(0,+\infty)$;

\item[\uline{$s$}] the average birth rate of population;

\item[\uline{$\rho \in (0,1)$}] the average percentage of female births per pregnancy (thus $1-\rho$ indicates the average percentage of male births, while $(1-\rho)/\rho$ is the secondary sex ratio of the population);

\item[\uline{$s \mu_x, s \mu_y$}] the mortality rate for females and males, respectively;

\item[$\underline{s\X_x, s\X_y \ge 0}$] indicating how the growth of the female population is negatively affected by its own size and by the size of the male population, respectively. We suppose that $ \X_x + \X_y > 0 $, since otherwise there would not be a coercive force to limit the population growth of the females and either both genders would have an unlimited growth or the female population would increase until the male population would become extinct;

\item[$\underline{s\Y_x, s\Y_y \ge 0}$] describing how the growth of the male population is negatively affected
by the size of the female population and by its own size, respectively. As before, at least one of these parameters will be strictly positive;

\item[$\underline{s\X_{xy}, s\Y_{xy} \ge 0}$] indicating how the growth of the female population is negatively affected
by the interaction with the male population, and how the growth of the male population is negatively affected
by the interaction with the female population, respectively.
\end{description}
We present then the model
\begin{equation}\label{Model}\begin{array}{lcl}
\dot{x} &=&\displaystyle \rho s F(x,y) - s(\mu_x x+\X_x x^2 + \X_{xy}xy+\X_y\mathbf{1}_{\R_+^*}(x) y^2), \\\\
\dot{y} &=&\displaystyle (1-\rho) s F(x,y) - s(\mu_y y+ \Y_x\mathbf{1}_{\R_+^*}(y) x^2 + \Y_{xy}xy+ \Y_y y^2),
\end{array}\end{equation}
where $F$ is the mating function given by \eqref{matingfunction}. Note that the terms $\X_y y^2$ and $\Y_x x^2$ above are multiplied by
 $\mathbf{1}_{\R_+^*}(x)$ and $\mathbf{1}_{\R_+^*}(y)$, respectively, since the effect of a gender on the other one will only be considered in its presence.

In the next sections, we will study the behavior of the system for the situation where there is no intersexual competition and the mortality rate of each
gender is negligible with respect to the density-dependent mortality.

\section{Singularities of the vector fields}\label{sec.singularities}

From now on, we will consider $\X_{xy}=\Y_{xy}=\mu_x=\mu_y=0$. Furthermore, due to our qualitative and geometric approach, without loss of generality, we may assume $ s = 1 $ in our analysis.
Using \eqref{Model}, let then $\Phi:(\R_+)^2\to\R^2$ be the vector field given by
\begin{equation*}\Phi(x,y):=\bigl(\dot{x}(x,y),\dot{y}(x,y)).\end{equation*}

Note that, defining the maps $\Phi_I:\R^2\to\R^2$ and $\Phi_{II}:\R^2\to\R^2$ by
\begin{equation*}\begin{array}{lcl}
\Phi_I(x,y):=\Big(\rho x - (\X_x x^2 + \X_y y^2)\ ,\ (1-\rho) x - (\Y_x x^2 + \Y_y y^2)\Big),\\\\
\Phi_{II}(x,y):=\Big(\rho ry - (\X_x x^2 + \X_y y^2)\ ,\ (1-\rho) ry - (\Y_x x^2 + \Y_y y^2)\Big),\end{array}
\end{equation*}
and the regions
\begin{equation*}
R_I:=\{(x,y)\in (\R_+^*)^2:\ y-r^{-1}x\geq 0\}, \qquad R_{II}:=\{(x,y)\in (\R_+^*)^2:\ y-r^{-1} x \leq 0\},
\end{equation*}
we have that, except on the axes, $\Phi$ can be written as
\begin{equation}\label{Phi_I_II}\Phi(x,y)=\left\{\begin{array}{ll}
\Phi_I(x,y) &\text{ if } (x,y)\in R_I\\\\
\Phi_{II}(x,y) &\text{ if } (x,y)\in R_{II}
\end{array}\right..\end{equation}

Therefore, a strategy to understand the flow generated by the vector field $\Phi$
consists in studying the flows generated by the vector fields $\Phi_I$ and $\Phi_{II}$,
and the way as they are coupled along the ray $x=ry$, $y \ge 0$.

\subsection{Singularities of the vector fields $\Phi_I$ and $\Phi_{II}$}

In the sequence, we will study the singularities of the vector fields $\Phi_I$ and $\Phi_{II}$.
The existence of singularities for the vector fields and the type of these singularities obviously
depend on the choice of parameters.

\begin{defn}
We denote
\begin{equation*}
\Delta:=\X_x\Y_y-\X_y\Y_x, \quad \Delta_y:=\rho\Y_y-(1-\rho)\X_y, \quad \text{and} \quad \Delta_x:=\rho\Y_x-(1-\rho)\X_x.
\end{equation*}
\end{defn}

Using the above notation, we have the following result.

\begin{theo}\label{VFI_Theo}
The vector fields $\Phi_I:\R\to\R$ and $\Phi_{II}:\R\to\R$ admit finitely many singularities with non-null coordinates if, and only if,
\begin{equation}\label{NonNullDenominator}
\Delta\neq 0,
\end{equation}
and
\begin{equation}\label{negative}
\Delta_x\Delta_y< 0.
\end{equation}
Moreover, under the above conditions, $\Phi_I:\R\to\R$ has the singularities $(0,0)$, $(x_I,y_I)$ and $(x_I,-y_I)$, where
\begin{equation}\label{simp_sing_I}
x_I=\frac{\Delta_y}{\Delta}, \qquad \qquad y_I=\frac{\sqrt{- \Delta_x\Delta_y}}{|\Delta|},
\end{equation}
while $\Phi_{II}:\R\to\R$ has the singularities $(0,0)$, $(x_{II},y_{II})$ and $(-x_{II},y_{II})$, where
\begin{equation}\label{simp_sing_II}
x_{II}=\frac{\sqrt{-\Delta_x\Delta_y}}{|\Delta|}r,  \qquad \qquad y_{II}=-\frac{\Delta_x}{\Delta}r.
\end{equation}
\end{theo}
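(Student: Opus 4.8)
The plan is to reduce the location of singularities with non-null coordinates to an elimination between two conics, handling the degenerate case $\Delta=0$ separately. A point $(x,y)$ with $x\neq 0$ and $y\neq 0$ is a singularity of $\Phi_I$ exactly when
\[
\X_x x^2 + \X_y y^2 = \rho x, \qquad \Y_x x^2 + \Y_y y^2 = (1-\rho)x ,
\]
while the corresponding system for $\Phi_{II}$ is obtained by replacing $\rho x$ and $(1-\rho)x$ with $\rho r y$ and $(1-\rho)r y$. Taking the linear combinations $\Y_y(\text{first})-\X_y(\text{second})$ and $\Y_x(\text{first})-\X_x(\text{second})$ cancels the $y^2$, resp.\ $x^2$, terms and turns the $\Phi_I$ system into
\[
\Delta\,x^2 = \Delta_y\,x, \qquad \Delta\,y^2 = -\,\Delta_x\,x ;
\]
the matrix of this combination has determinant $-\Delta$, so these two equations are equivalent to the original pair whenever $\Delta\neq 0$, and in any case $(0,0)$ is a singularity.

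Assuming \eqref{NonNullDenominator}, dividing the first relation by $x\neq 0$ forces $x=\Delta_y/\Delta$, and then the second gives $y^2=-\Delta_x\Delta_y/\Delta^2$. This equation has a solution with $y\neq 0$ precisely when $\Delta_x\Delta_y<0$, that is \eqref{negative}, and then exactly the two solutions $y=\pm\sqrt{-\Delta_x\Delta_y}/|\Delta|$, giving the points $(x_I,\pm y_I)$ of \eqref{simp_sing_I}. Conversely, if $\Delta\neq 0$ but $\Delta_x\Delta_y\ge 0$, then $y^2=-\Delta_x\Delta_y/\Delta^2$ admits no solution with $y\neq 0$, so there is no non-null singularity. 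The same elimination applied to $\Phi_{II}$ — now solving first for $y=-r\Delta_x/\Delta$ and then for $x^2=-r^2\Delta_x\Delta_y/\Delta^2$ — yields $(\pm x_{II},y_{II})$ of \eqref{simp_sing_II}. One also checks quickly that under \eqref{NonNullDenominator}--\eqref{negative} neither field has a singularity with exactly one vanishing coordinate (such a point would force $\Delta=0$ or $\Delta_x\Delta_y=0$), so the lists $\{(0,0),(x_I,\pm y_I)\}$ and $\{(0,0),(\pm x_{II},y_{II})\}$ are exhaustive.

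It remains to rule out $\Delta=0$. Here I would use the standing hypotheses $\X_x+\X_y>0$ and $\Y_x+\Y_y>0$: since all four coefficients are nonnegative, $\X_x\Y_y=\X_y\Y_x$ means $(\Y_x,\Y_y)=\lambda(\X_x,\X_y)$ for some $\lambda>0$. Substituting, the second defining equation of the $\Phi_I$ system collapses to $(\lambda\rho-(1-\rho))x=0$, so a non-null singularity can exist only if $\lambda=(1-\rho)/\rho$, and in that situation every point of the nondegenerate conic $\X_x x^2+\X_y y^2=\rho x$ with both coordinates nonzero is a singularity — an infinite set in each of the subcases $\X_x=0$, $\X_y=0$ and $\X_x\X_y>0$. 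Thus $\Delta=0$ never produces a nonempty finite family of non-null singularities, and it is consistent that here $\Delta_x\Delta_y=(\lambda\rho-(1-\rho))^2\X_x\X_y\ge 0$, so \eqref{negative} fails as well; the symmetric argument disposes of $\Phi_{II}$. I expect the only real difficulty to be organizational bookkeeping — verifying the exhaustiveness claims and, in the case $\Delta=0$, that the conic is genuinely infinite in every subcase — since the underlying algebra is a short, reversible elimination.
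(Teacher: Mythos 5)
Your proof is correct and follows essentially the same route as the paper: both arguments eliminate between the two quadrics $\X_x x^2+\X_y y^2=\rho x$ and $\Y_x x^2+\Y_y y^2=(1-\rho)x$ (resp.\ the $\rho r y$ versions) to force $x_I=\Delta_y/\Delta$, $y_I^2=-\Delta_x\Delta_y/\Delta^2$, and then dispose of $\Delta=0$ as the infinite/empty degenerate case. Your execution is slightly cleaner in three minor respects: eliminating via the linear combination whose matrix has determinant $-\Delta$ avoids the paper's division by $\X_y,\Y_y$ and its separate treatment of $\X_y=0$ or $\Y_y=0$; the degenerate case is organized through the proportionality $(\Y_x,\Y_y)=\lambda(\X_x,\X_y)$ instead of the paper's split on $\Delta_x,\Delta_y$; and you make explicit the (correct) check that under \eqref{NonNullDenominator}--\eqref{negative} no singularity has exactly one vanishing coordinate, a point the paper leaves implicit.
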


\begin{proof}
We will only prove the result for the vector field $\Phi_I$, since the proof for $\Phi_{II}$ is completely analogous.
Let us show that \eqref{NonNullDenominator} and \eqref{negative} are sufficient conditions to have finitely many singularities with non-zero coordinates and that the singularities are given accordingly to \eqref{simp_sing_I}. So let $(x_I,y_I)$ be a non-null solution of
\begin{equation}\label{EqSingI}
\Big(\rho x - (\X_x x^2 + \X_y y^2), (1-\rho) x - (\Y_x x^2 + \Y_y y^2)\Big) = (0, 0).
\end{equation}

For a moment, suppose that $\X_y$ and $\Y_y$ are both non-null. By solving the first equation for $y^2$ and then using it in the second equation, we get that $x_I$ satisfies
\begin{equation*}
\left(\frac{\Y_x}{\Y_y}-\frac{\X_x}{\X_y}\right)x_I=\frac{1-\rho}{\Y_y}-\frac{\rho}{\X_y}.
\end{equation*}
Since $\frac{\Y_x}{\Y_y}-\frac{\X_x}{\X_y}=-\frac{\Delta}{\X_y\Y_y}\neq 0$ and $\frac{1-\rho}{\Y_y}-\frac{\rho}{\X_y}=-\frac{\Delta_y}{\X_y\Y_y}\neq 0$, we obtain that $x_I=\frac{\Delta_y}{\Delta}$.

Thus, replacing the value of $x_I$ in the expression obtained for $y^2$, we find out that the nonnegative second coordinate will be
\begin{equation*}
y_I = \sqrt{x_I\left(\frac{\rho}{\X_y}-\frac{\X_x}{\X_y} x_I\right)}
=\sqrt{\frac{\Delta_y}{\Delta}\left(\frac{\rho}{\X_y}-\frac{\X_x}{\X_y} \frac{\Delta_y}{\Delta}\right)}=\frac{\sqrt{- \Delta_x\Delta_y }}{|\Delta|}.
\end{equation*}

Now, observe that if $\X_y = 0$ or $\Y_y = 0$, then the above solution obtained for $\eqref{EqSingI}$ can be also achieved by equaling the
corresponding parameter(s) to zero in \eqref{simp_sing_I}. Thus, we have proved that \eqref{NonNullDenominator} and \eqref{negative} are
sufficient conditions to have finitely many non-null singularities.

To see that \eqref{NonNullDenominator} and \eqref{negative} are also necessary conditions, the reader may check without difficulty that,
if $\Delta= 0$ but $\Delta_y\neq0$ or $\Delta_x\neq 0$, then the unique singular point for $\Phi_I$ and $\Phi_{II}$ is the origin.
Finally, notice that, if $\Delta_x=\Delta_y= 0$, then $ \Delta = 0 $ and the singularities of $\Phi_I$ are all the points
belonging to the conic defined by $\X_x x^2 + \X_y y^2 - \rho x=0$, while the singularities of $\Phi_{II}$ are all
the points belonging to the conic defined by $\X_x x^2 + \X_y y^2 - \rho ry=0$.
\end{proof}

\begin{rem}
For the degenerate case $ \Delta = \Delta_x = \Delta_y = 0 $, it is easy to see that
each singularity is nonhyperbolic; however, since this is a non-generic situation, we will not treat it in this work.
\end{rem}

\section{Sex ratios}\label{sec.sex_ratios}

We regroup in this section several results on dynamical properties related to sex ratios which will be useful in the local and global analysis
of the proposed nonsmooth system.

\begin{defn} The secondary sex ratio of the population is the quantity $\sigma:=\frac{1-\rho}{\rho}$.
\end{defn}

\begin{defn} Let $\bar x$ and $\bar y$ be, respectively, the female and the male populations at equilibrium (whenever it exists). Then, the tertiary sex ratio of the population is the quantity $\tau(\bar x,\bar y):=\frac{\bar y}{\bar x}$. When $(\bar x,\bar y)$ is the unique equilibrium point we will denote the tertiary sex ratio simply by $\tau$.
\end{defn}

Under the convention that a division of a positive number by zero is $+\infty$, we can assure a singularity in the first quadrant for each vector field
by using the conditions below, which compare the ratios of competition factors to the secondary sex ratio.

\begin{prop}\label{VFI_Cor} If either
\begin{equation}\label{OUOU_1}
\frac{\Y_x}{\X_x}<\sigma<\frac{\Y_y}{\X_y}
\end{equation}
or
\begin{equation}\label{OUOU_2}
\frac{\Y_y}{\X_y}<\sigma<\frac{\Y_x}{\X_x},
\end{equation}
then $\Phi_I$ and $\Phi_{II}$ have exactly three distinct singularities given by theorem \ref{VFI_Theo}, besides both $(x_I,y_I)$ and $(x_{II},y_{II})$ belong to the first quadrant. If $\frac{\Y_x}{\X_x}=\sigma=\frac{\Y_y}{\X_y}$, then $\Phi_I$ and $\Phi_{II}$ have infinitely many singularities which are the points belonging to the conic defined, respectively, by $\X_x x^2 + \X_y y^2 - \rho x=0$ and $\X_x x^2 + \X_y y^2 - \rho ry=0$. In any other case, neither $\Phi_I$ nor $\Phi_{II}$ have singularities on the first quadrant.
\end{prop}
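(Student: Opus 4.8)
The plan is to reduce Proposition \ref{VFI_Cor} to Theorem \ref{VFI_Theo} by translating the competition-ratio inequalities \eqref{OUOU_1}--\eqref{OUOU_2} into the sign conditions \eqref{NonNullDenominator} and \eqref{negative} on $\Delta$, $\Delta_x$, $\Delta_y$, and then checking the positivity of the coordinates given by \eqref{simp_sing_I}--\eqref{simp_sing_II}. The first thing I would do is record the algebraic identities
\begin{equation*}
\Delta = \X_x\Y_y - \X_y\Y_x = \rho^{-1}\bigl(\X_x\,\Delta_y - \X_y\,\Delta_x\bigr), \qquad \Delta_x = \rho\,\Y_x - (1-\rho)\X_x = \rho\,\X_x\Bigl(\frac{\Y_x}{\X_x}-\sigma\Bigr),
\end{equation*}
and the analogous $\Delta_y = \rho\,\X_y(\Y_y/\X_y - \sigma)$, valid whenever the relevant intrasexual parameters are nonzero; the cases $\X_x=0$ or $\X_y=0$ must be handled by reading the inequalities with the convention that a positive number over zero is $+\infty$, exactly as the proposition stipulates. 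From these, $\Delta_x$ has the sign of $\Y_x/\X_x - \sigma$ and $\Delta_y$ has the sign of $\Y_y/\X_y - \sigma$, so condition \eqref{OUOU_1} forces $\Delta_x<0<\Delta_y$ and condition \eqref{OUOU_2} forces $\Delta_y<0<\Delta_x$; in either case $\Delta_x\Delta_y<0$, which is \eqref{negative}.

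Next I would verify \eqref{NonNullDenominator}. Using $\rho\,\Delta = \X_x\,\Delta_y - \X_y\,\Delta_x$, in the situation of \eqref{OUOU_1} both terms $\X_x\,\Delta_y\ge 0$ and $-\X_y\,\Delta_x\ge 0$ are nonnegative, and they cannot vanish simultaneously: if $\X_x\,\Delta_y=0$ then (since $\Delta_y\neq 0$) $\X_x=0$, and then the left inequality in \eqref{OUOU_1} reads $\Y_x/0<\sigma$, i.e. $+\infty<\sigma$ if $\Y_x>0$ or $0<\sigma$ if $\Y_x=0$; only the latter is possible, giving $\Y_x=0$, whence $-\X_y\,\Delta_x = \rho\X_y^2\sigma\cdot(\text{something})$—more directly, $\Delta_x = -(1-\rho)\X_x = 0$ as well, contradicting $\Delta_x\neq 0$. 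Thus $\Delta\neq 0$, and symmetrically under \eqref{OUOU_2}. Having established \eqref{NonNullDenominator} and \eqref{negative}, Theorem \ref{VFI_Theo} immediately gives that $\Phi_I$ and $\Phi_{II}$ each have exactly the three singularities listed, with $y_I, x_{II}>0$ automatic from $-\Delta_x\Delta_y>0$. It then remains to check the signs of $x_I = \Delta_y/\Delta$ and $y_{II} = -\Delta_x r/\Delta$: writing $\operatorname{sgn}(\Delta) = \operatorname{sgn}(\X_x\Delta_y - \X_y\Delta_x)$ and using that under \eqref{OUOU_1} we have $\Delta_y>0$, $\Delta_x<0$, hence $\Delta>0$, one gets $x_I>0$ and $y_{II}>0$; under \eqref{OUOU_2} the signs of $\Delta_y$, $\Delta_x$, $\Delta$ all flip together, so the two quotients remain positive. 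Hence $(x_I,y_I)$ and $(x_{II},y_{II})$ lie in the open first quadrant.

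For the boundary case $\Y_x/\X_x = \sigma = \Y_y/\X_y$ one has $\Delta_x=\Delta_y=0$, hence $\Delta=0$, and the last part of the proof of Theorem \ref{VFI_Theo} (the degenerate branch) already identifies the singularity sets as the two conics $\X_x x^2 + \X_y y^2 - \rho x = 0$ and $\X_x x^2 + \X_y y^2 - \rho r y = 0$; I would just cite that. Finally, for ``any other case'': if $\sigma$ equals exactly one of the two ratios, say $\sigma = \Y_y/\X_y$ but $\sigma\neq \Y_x/\X_x$, then $\Delta_y=0$ while $\Delta_x\neq 0$, and since then $\rho\Delta = -\X_y\Delta_x$; either $\Delta=0$ (origin is the unique singularity, by Theorem \ref{VFI_Theo}) or $\Delta\neq 0$ and then $x_I = \Delta_y/\Delta = 0$, so the extra singularities of $\Phi_I$ lie on the $y$-axis, not in the open first quadrant (and symmetrically $y_{II}=-\Delta_x r/\Delta$ has the wrong configuration—one checks $(x_{II},y_{II})$ is likewise off the open quadrant). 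If instead $\sigma$ is strictly outside the interval between the two ratios, then $\Delta_x$ and $\Delta_y$ have the same sign, so $\Delta_x\Delta_y>0$, violating \eqref{negative}; by Theorem \ref{VFI_Theo} there are then no non-null singularities at all. The one genuine subtlety, and the step I expect to absorb most of the care, is the bookkeeping of the $\X_x=0$ and $\X_y=0$ (equivalently $\Y$-analogue) sub-cases together with the $+\infty$ convention, making sure that in each such degenerate reading the claimed sign of $\Delta$ still holds and no singularity sneaks onto an axis or off to infinity; everything else is a direct translation of inequalities into signs feeding into Theorem \ref{VFI_Theo}.
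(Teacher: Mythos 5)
Your proposal is correct and takes essentially the same route as the paper: the paper's proof simply observes that \eqref{OUOU_1} is equivalent to $\Delta>0$, $\Delta_y>0$, $\Delta_x<0$ (and \eqref{OUOU_2} to the opposite signs) and then cites theorem \ref{VFI_Theo}, which is exactly the translation you carry out, only with the algebra (the identity $\rho\Delta=\X_x\Delta_y-\X_y\Delta_x$, positivity of the coordinates, and the degenerate cases) written out. One small caveat: your detour through the sub-case $\X_x=\Y_x=0$ (reading $0/0$ as $0$) is shaky, since the contradiction you invoke there relies on $\Delta_x\neq 0$, which is not established in that sub-case; under the paper's convention this sub-case simply cannot satisfy \eqref{OUOU_1} (as the remark following the proposition notes, \eqref{OUOU_1} forces $\X_x,\Y_y>0$), so the detour is unnecessary and harmless.
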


\begin{proof}
Note that \eqref{OUOU_1} is equivalent to
$\Delta>0 $, $ \Delta_y> 0 $ and $ \Delta_x< 0 $,
while \eqref{OUOU_2} is equivalent to
$ \Delta<0 $, $ \Delta_y< 0 $ and $ \Delta_x> 0 $.
Therefore, from the previous theorem, we immediately get the result.
\end{proof}

\begin{rem} Note that if \eqref{OUOU_1} holds, then neither $\X_x$ or $\Y_y$ are null, while if \eqref{OUOU_2} holds, then neither $\X_y$ or $\Y_x$ are null. However, in this study, we will only treat generic cases,
so from now on we will assume that all these parameters are strictly positive.
\end{rem}

The next result shows that both equilibrium points $(x_I,y_I)$ and $(x_{II},y_{II})$ (each one with respect to its respective vector field)
correspond to populations with the same tertiary sex ratio. In particular, it means that $(x_I,y_I)$ and $(x_{II},y_{II})$ are collinear
with the origin, so they belong to the same region $R_I$ or $R_{II}$.

\begin{theo}\label{Rel_bet_sing}
Suppose that either \eqref{OUOU_1} or \eqref{OUOU_2} holds. Then $\tau(x_I,y_I)=\tau(x_{II},y_{II})=:\tau$.
In particular, $(x_{II},y_{II})=(x_I,y_I)r\tau$.
\end{theo}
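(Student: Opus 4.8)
The plan is to compute the two tertiary sex ratios directly from the explicit formulas for the singularities provided by Theorem~\ref{VFI_Theo}, and then to verify that, under either \eqref{OUOU_1} or \eqref{OUOU_2}, the signs work out so that both coordinates are genuinely positive (which is already guaranteed by Proposition~\ref{VFI_Cor}) and the ratios coincide. From \eqref{simp_sing_I} we have $\tau(x_I,y_I) = y_I/x_I = \sqrt{-\Delta_x\Delta_y}/(|\Delta|) \cdot \Delta/\Delta_y = \sqrt{-\Delta_x\Delta_y}\,\mathrm{sgn}(\Delta)/\Delta_y$. From \eqref{simp_sing_II} we have $\tau(x_{II},y_{II}) = y_{II}/x_{II} = (-\Delta_x/\Delta)r \cdot |\Delta|/(\sqrt{-\Delta_x\Delta_y}\,r) = -\Delta_x\,\mathrm{sgn}(\Delta)/\sqrt{-\Delta_x\Delta_y}$. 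Now observe that $\bigl(\sqrt{-\Delta_x\Delta_y}\bigr)^2 = -\Delta_x\Delta_y$, so $\sqrt{-\Delta_x\Delta_y}/\Delta_y = -\Delta_x/\sqrt{-\Delta_x\Delta_y}$ as soon as both sides have the same sign; the two expressions for $\tau$ are therefore equal up to checking that this common sign is consistent, which I address next.

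First I would record the sign analysis already implicit in the proof of Proposition~\ref{VFI_Cor}: condition \eqref{OUOU_1} is equivalent to $\Delta>0$, $\Delta_y>0$, $\Delta_x<0$, and condition \eqref{OUOU_2} is equivalent to $\Delta<0$, $\Delta_y<0$, $\Delta_x>0$. In the first case $\mathrm{sgn}(\Delta)=+1$, $\Delta_y>0$, so $\tau(x_I,y_I) = \sqrt{-\Delta_x\Delta_y}/\Delta_y > 0$, and $-\Delta_x>0$, so $\tau(x_{II},y_{II}) = -\Delta_x/\sqrt{-\Delta_x\Delta_y} > 0$; squaring either gives $-\Delta_x/\Delta_y$, hence the two positive numbers are equal. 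In the second case $\mathrm{sgn}(\Delta)=-1$, $\Delta_y<0$ so $-\sqrt{-\Delta_x\Delta_y}/\Delta_y>0$, and $-\Delta_x<0$ so $\Delta_x/\sqrt{-\Delta_x\Delta_y}>0$; again both squares equal $-\Delta_x/\Delta_y$, so the ratios agree. In either case we set $\tau := \tau(x_I,y_I) = \tau(x_{II},y_{II})$, a positive number.

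For the last assertion, I would simply substitute. Since $\tau = y_I/x_I$ we have $y_I = x_I\tau$, hence $(x_I,y_I) = x_I(1,\tau)$, and similarly $(x_{II},y_{II}) = x_{II}(1,\tau)$. It therefore suffices to show $x_{II} = r\tau x_I$. Using \eqref{simp_sing_I}, \eqref{simp_sing_II} and the value of $\tau(x_{II},y_{II})$ computed above, $r\tau x_I = r \cdot \dfrac{-\Delta_x\,\mathrm{sgn}(\Delta)}{\sqrt{-\Delta_x\Delta_y}} \cdot \dfrac{\Delta_y}{\Delta} = \dfrac{-r\,\Delta_x\Delta_y}{\Delta\,\mathrm{sgn}(\Delta)\sqrt{-\Delta_x\Delta_y}} = \dfrac{r\sqrt{-\Delta_x\Delta_y}}{|\Delta|} = x_{II}$, where in the penultimate step I used $-\Delta_x\Delta_y = \bigl(\sqrt{-\Delta_x\Delta_y}\bigr)^2$ and $\Delta\,\mathrm{sgn}(\Delta) = |\Delta|$. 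This gives $(x_{II},y_{II}) = r\tau\,x_I(1,\tau) = r\tau\,(x_I,y_I)$, as claimed. The collinearity of $(x_I,y_I)$ and $(x_{II},y_{II})$ with the origin, and hence that they lie in the same region $R_I$ or $R_{II}$, is then immediate.

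I do not expect any genuine obstacle here; the only point requiring care is the bookkeeping of signs of $\Delta$, $\Delta_x$, $\Delta_y$ when extracting square roots, which is why I would invoke the sign equivalences from Proposition~\ref{VFI_Cor} at the outset rather than carrying absolute values through the computation.
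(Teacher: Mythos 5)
Your proposal is correct and follows essentially the same route as the paper: both compute the ratios $y_I/x_I$ and $y_{II}/x_{II}$ directly from the explicit formulas \eqref{simp_sing_I}--\eqref{simp_sing_II}, obtain $\tau=\sqrt{\left|\Delta_x/\Delta_y\right|}$, and deduce $(x_{II},y_{II})=r\tau\,(x_I,y_I)$ (the paper shortcuts via the observation $x_{II}=ry_I$, while you substitute directly, and you spell out the sign bookkeeping that the paper absorbs into the absolute value).
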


\begin{proof}
Notice that directly from \eqref{simp_sing_I} and \eqref{simp_sing_II}, we get that
$ \frac{y_I}{x_I}=\frac{y_{II}}{x_{II}}=\sqrt{\left|\frac{\Delta_x}{\Delta_y}\right|}=:\tau $
and $ x_{II}=ry_I $.
Therefore, $x_{II}=ry_I= r\tau x_I$ and $ y_{II}=\tau x_{II}=r\tau y_I $.
\end{proof}

We have then immediate corollaries.

\begin{cor}\label{tau(sigma)}
If either \eqref{OUOU_1} or \eqref{OUOU_2} holds, then (for the vector field $\Phi_I$ as well as for the vector field $\Phi_{II}$) the tertiary sex ratio is given by
\begin{equation}\label{tau}\tau=\sqrt{\left|\frac{\Delta_x}{\Delta_y}\right|}=\sqrt{\frac{\sigma\X_x-\Y_x}{\Y_y-\sigma\X_y}}.\end{equation}
\end{cor}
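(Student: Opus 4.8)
The plan is to observe that the first equality costs nothing: it was already extracted in the proof of Theorem \ref{Rel_bet_sing}, where the common value of $y_I/x_I$ and $y_{II}/x_{II}$ was computed to be $\sqrt{|\Delta_x/\Delta_y|}$ precisely under the hypothesis \eqref{OUOU_1} or \eqref{OUOU_2}. So the only real task is to re-express $|\Delta_x/\Delta_y|$ through $\sigma$ and the competition parameters $\X_x,\X_y,\Y_x,\Y_y$ (all assumed strictly positive under the running genericity convention).

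For this I would start from the definitions $\Delta_x=\rho\Y_x-(1-\rho)\X_x$ and $\Delta_y=\rho\Y_y-(1-\rho)\X_y$, and factor $\rho>0$ out of numerator and denominator, which does not change the quotient: $\Delta_x/\Delta_y=(\Y_x-\sigma\X_x)/(\Y_y-\sigma\X_y)$, since $\sigma=(1-\rho)/\rho$. Taking absolute values reduces the statement to a short sign check, which I would split along the two hypotheses, exactly as in the proof of Proposition \ref{VFI_Cor}.

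In the case \eqref{OUOU_1}, equivalent to $\Delta_x<0<\Delta_y$, the inequalities $\Y_x/\X_x<\sigma<\Y_y/\X_y$ give (using $\X_x,\X_y>0$) that $\sigma\X_x-\Y_x>0$ and $\Y_y-\sigma\X_y>0$; hence $|\Delta_x/\Delta_y|=(\sigma\X_x-\Y_x)/(\Y_y-\sigma\X_y)$ with both entries positive, which is \eqref{tau}. In the case \eqref{OUOU_2}, equivalent to $\Delta_y<0<\Delta_x$, the same computation yields $\Y_y-\sigma\X_y<0$ and $\sigma\X_x-\Y_x<0$, so the quotient inside the absolute value is a ratio of two negative quantities; cancelling the two minus signs gives again $(\sigma\X_x-\Y_x)/(\Y_y-\sigma\X_y)$, now realized as the ratio of the positive numbers $\Y_x-\sigma\X_x$ and $\sigma\X_y-\Y_y$. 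Either way, formula \eqref{tau} follows. There is no substantive obstacle here; the only point requiring a little care is moving the absolute value through the quotient without losing a sign, which is why I would keep the two cases separate rather than attempting a uniform argument.
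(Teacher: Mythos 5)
Your proposal is correct and matches the paper's (implicit) argument: the paper treats this as an immediate consequence of Theorem \ref{Rel_bet_sing}, with the second equality coming from factoring $\rho$ out of $\Delta_x$ and $\Delta_y$ and the same sign check under \eqref{OUOU_1} versus \eqref{OUOU_2} that you carry out explicitly. Nothing is missing; you have simply written out the routine details the paper omits.
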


\begin{cor}\label{Cor.Rel_bet_sing}
Under the same assumptions of theorem \ref{Rel_bet_sing}, we have that
\begin{align*}
r^{-1}<\tau & \quad \Leftrightarrow \quad x_I<x_{II}\quad\text{and}\quad y_I<y_{II},\\
r^{-1}=\tau & \quad \Leftrightarrow \quad x_I=x_{II}\quad\text{and}\quad y_I=y_{II},\\
r^{-1}>\tau & \quad \Leftrightarrow \quad x_I>x_{II}\quad\text{and}\quad y_I>y_{II}.
\end{align*}
\end{cor}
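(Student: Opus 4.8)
The plan is to read off all three equivalences at once from the multiplicative relation $(x_{II},y_{II})=(x_I,y_I)\,r\tau$ supplied by Theorem \ref{Rel_bet_sing}. First I would record the positivity facts that make the argument go through: under either \eqref{OUOU_1} or \eqref{OUOU_2}, Proposition \ref{VFI_Cor} guarantees that $(x_I,y_I)$ lies in the open first quadrant, so $x_I>0$ and $y_I>0$; moreover $r>0$ by its very definition as the average number of female partners per male, and $\tau>0$ by \eqref{tau}. Hence $r\tau$ is a strictly positive scalar, and the identities $x_{II}=r\tau\,x_I$, $y_{II}=r\tau\,y_I$ exhibit $(x_{II},y_{II})$ as the image of $(x_I,y_I)$ under a positive homothety centered at the origin.

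Next I would compute the signed differences $x_{II}-x_I=(r\tau-1)\,x_I$ and $y_{II}-y_I=(r\tau-1)\,y_I$. Since $x_I>0$ and $y_I>0$, each of these two quantities has exactly the same sign as the single number $r\tau-1$; in particular the $x$-comparison and the $y$-comparison are always simultaneously true. Finally, because $r>0$, the sign of $r\tau-1$ is governed by the comparison of $\tau$ with $r^{-1}$: one has $r\tau-1>0$ iff $\tau>r^{-1}$, and likewise with $=$ and with $<$. Combining, $r^{-1}<\tau$ forces both $x_I<x_{II}$ and $y_I<y_{II}$, and conversely either strict inequality forces $r\tau>1$, hence $r^{-1}<\tau$; the same bookkeeping with $=$ and with $>$ produces the remaining two lines of the statement.

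I do not expect any real obstacle here: the corollary is a direct unwinding of Theorem \ref{Rel_bet_sing} together with the strict positivity of $x_I$, $y_I$ and $r$. The only point deserving a word of care is that the corollary asserts the two coordinate inequalities are \emph{equivalent} to the comparison $r^{-1}$ versus $\tau$, not merely implied by it; this is handled cleanly by routing everything through the single sign of $r\tau-1$, which simultaneously controls $x_{II}-x_I$, $y_{II}-y_I$, and the comparison $\tau$ versus $r^{-1}$, rather than analyzing the $x$- and $y$-comparisons separately.
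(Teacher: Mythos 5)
Your proof is correct and is exactly the argument the paper has in mind: the corollary is stated as an immediate consequence of Theorem \ref{Rel_bet_sing}, and your route through the identity $(x_{II},y_{II})=r\tau\,(x_I,y_I)$ together with the positivity of $x_I$, $y_I$, $r$ and the sign of $r\tau-1$ is the intended unwinding. Nothing further is needed.
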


\begin{defn}\label{comp.pol} For $ \X_x, \X_y, \Y_x, \Y_y > 0 $, the competition polynomial is defined by
$$Q(a):=\X_ya^3-\Y_y a^2+\X_x a-\Y_x.$$
\end{defn}

\begin{lem}
If $\Delta\neq 0$, then the polynomial $Q$ has exactly one real root which lies in the open interval with endpoints $\frac{\Y_x}{\X_x}$ and $\frac{\Y_y}{\X_y}$.
\end{lem}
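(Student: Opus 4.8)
The plan is to convert the cubic into a fixed-point problem. Since $\X_x,\X_y>0$ the quadratic $\X_ya^2+\X_x$ never vanishes, and a direct expansion gives
$$Q(a)=\bigl(\X_ya^2+\X_x\bigr)\bigl(a-g(a)\bigr),\qquad g(a):=\frac{\Y_ya^2+\Y_x}{\X_ya^2+\X_x},$$
so the real roots of $Q$ are precisely the fixed points of $g$. First I would record that $g(a)>0$ for every real $a$ and that $a-g(a)<0$ whenever $a\le0$ (each term of $Q$ is then $\le0$, the constant term strictly so); hence every real root of $Q$ is positive, and it suffices to count fixed points of $g$ on $(0,+\infty)$.

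Next I would localize those fixed points. Expanding the two discrepancies
$$g(a)-\frac{\Y_x}{\X_x}=\frac{\Delta\,a^2}{\X_x\bigl(\X_ya^2+\X_x\bigr)},\qquad g(a)-\frac{\Y_y}{\X_y}=\frac{-\,\Delta}{\X_y\bigl(\X_ya^2+\X_x\bigr)},$$
one sees that for $a>0$ and $\Delta\neq0$ they carry opposite signs, so $g(a)$ — and in particular every fixed point of $g$ — lies strictly inside the open interval $I$ with endpoints $\Y_x/\X_x$ and $\Y_y/\X_y$. The same identities evaluate $Q$ at the endpoints, namely $Q(\Y_x/\X_x)=-\Delta\,\Y_x^2/\X_x^3$ and $Q(\Y_y/\X_y)=\Delta/\X_y$, whose product equals $-\Delta^2\Y_x^2/(\X_x^3\X_y)<0$; since $Q$ is continuous, the intermediate value theorem already supplies at least one root in $I$. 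Thus existence and the localization claim form the routine part.

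It remains to prove uniqueness, i.e. that $h(a):=a-g(a)$ vanishes exactly once on $(0,+\infty)$. For this I would compute $g'(a)=2a\Delta/\bigl(\X_ya^2+\X_x\bigr)^2$, whose sign on $(0,+\infty)$ equals that of $\Delta$. If $\Delta<0$, then $g$ is strictly decreasing, so $h$ is strictly increasing and the root already found is the only one — this settles that case completely. The case $\Delta>0$ is the one I expect to be the genuine obstacle: there $g$ is increasing, $h$ need not be monotone, and one must rule out the graph of $g$ meeting the diagonal three times inside $I$. The leverage available is Descartes' rule of signs (which forces the number of positive roots to be one or three), the location of the critical points of $Q$ via $Q'(a)=3\X_ya^2-2\Y_ya+\X_x$ (which has positive real roots only when $\Y_y^2\ge 3\X_x\X_y$), and the Vieta relations $a_1+a_2+a_3=\Y_y/\X_y$, $a_1a_2+a_1a_3+a_2a_3=\X_x/\X_y$, $a_1a_2a_3=\Y_x/\X_y$ for a hypothetical root triple $a_1\le a_2\le a_3$; the crux — and the place where I would expect to need the sharpest estimate, perhaps even an assumption stronger than $\Delta\neq0$ — is showing that the configuration $\Y_x/\X_x<a_1\le a_2\le a_3<\Y_y/\X_y$ cannot occur.
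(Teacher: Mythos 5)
The parts you completed are correct and in fact more careful than the paper's own argument, which is the same rearrangement in disguise: the paper writes the root condition as $a^2=\frac{a\X_x-\Y_x}{\Y_y-a\X_y}$ and appeals to a one-line ``graphical analysis'', whereas you write it as the fixed-point problem $a=\frac{\Y_ya^2+\Y_x}{\X_ya^2+\X_x}$ and actually verify the key facts: all real roots are positive, the identities $g(a)-\frac{\Y_x}{\X_x}=\frac{\Delta a^2}{\X_x(\X_ya^2+\X_x)}$ and $g(a)-\frac{\Y_y}{\X_y}=\frac{-\Delta}{\X_y(\X_ya^2+\X_x)}$ localize every real root strictly between the endpoints, the endpoint values $Q(\Y_x/\X_x)=-\Delta\Y_x^2/\X_x^3$ and $Q(\Y_y/\X_y)=\Delta/\X_y$ give existence by the intermediate value theorem, and $g'(a)=2a\Delta/(\X_ya^2+\X_x)^2$ settles uniqueness when $\Delta<0$, i.e.\ under condition \eqref{OUOU_2}.

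However, the step you flagged as the crux for $\Delta>0$ is not merely hard: it fails, so the gap cannot be closed as the lemma is stated. The configuration $\frac{\Y_x}{\X_x}<a_1\le a_2\le a_3<\frac{\Y_y}{\X_y}$ with $\Delta\neq 0$ does occur: take $\X_y=1$, $\Y_y=6$, $\X_x=11$, $\Y_x=6$, so that $Q(a)=(a-1)(a-2)(a-3)$ has three distinct real roots, all lying in $(6/11,\,6)$, while $\Delta=60>0$. More generally, any positive triple $a_1,a_2,a_3$ is realized by the Vieta choice $\X_y=1$, $\Y_y=a_1+a_2+a_3$, $\X_x=a_1a_2+a_1a_3+a_2a_3$, $\Y_x=a_1a_2a_3$, and then $\Delta=(a_1a_2+a_1a_3+a_2a_3)(a_1+a_2+a_3)-a_1a_2a_3\ge 8\,a_1a_2a_3>0$ by the AM--GM inequality, so the hypothesis $\Delta\neq 0$ can never exclude the three-root case (consistently, your monotonicity argument shows three real roots force $\Delta>0$). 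Your suspicion that an assumption stronger than $\Delta\neq 0$ is needed is therefore exactly right: uniqueness is automatic under \eqref{OUOU_2}, but for $\Delta>0$ it requires genuine extra restrictions on the competition parameters. The paper's graphical proof silently assumes a single intersection of the parabola $a^2$ with the increasing branch of $\frac{a\X_x-\Y_x}{\Y_y-a\X_y}$ in precisely this case, and the later statements that speak of ``the real root'' $\alpha$ (corollary \ref{AlphaSigmaTau}, theorem \ref{noCycles}) inherit the same issue.
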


\begin{proof}
Consider the real functions $f(a):=a^2$ and $g(a):=\frac{a\X_x-\Y_x}{\Y_y-a\X_y}$.
Thus, $\alpha$ is a real root of $Q$ if, and only if, $f(\alpha)=g(\alpha)$.
Hence, the result follows directly from a graphical analysis of $f$ and $g$, considering the cases $\frac{\Y_x}{\X_x}<\frac{\Y_y}{\X_y}$ and $\frac{\Y_x}{\X_x}>\frac{\Y_y}{\X_y}$.
\end{proof}

\begin{cor}\label{AlphaSigmaTau}
Suppose either \eqref{OUOU_1} or \eqref{OUOU_2} holds. We have $\tau=\sigma$ if, and only if, $\sigma$ is the real root of $Q$. Furthermore, if $\alpha$ is the real root of $Q$, then
\begin{description}
\item[\qquad under \eqref{OUOU_1}:] $\tau$ is a increasing function of $\sigma$ and either $\tau\leq\sigma\leq\alpha$ or $\tau\geq\sigma\geq\alpha$; \item[\qquad under \eqref{OUOU_2}:] $\tau$ is a decreasing function of $\sigma$ and either $\sigma\leq\alpha\leq\tau$ or $\sigma\geq\alpha\geq\tau$.
\end{description}
\end{cor}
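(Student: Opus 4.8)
The plan is to transfer everything to the one-variable auxiliary function $h(\sigma):=\frac{\sigma\X_x-\Y_x}{\Y_y-\sigma\X_y}$, for which Corollary~\ref{tau(sigma)} gives $\tau=\sqrt{h(\sigma)}$ whenever \eqref{OUOU_1} or \eqref{OUOU_2} holds. Two preliminary observations make this legitimate and fix the relevant domains. First, on the interval $(\Y_x/\X_x,\Y_y/\X_y)$ (under \eqref{OUOU_1}) or $(\Y_y/\X_y,\Y_x/\X_x)$ (under \eqref{OUOU_2}) the numerator and denominator of $h$ are both positive in the first case and both negative in the second, so $h>0$, $\tau$ is a well-defined positive number, and $\sigma>0$. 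Second, by the previous lemma the real root $\alpha$ of $Q$ lies strictly between $\Y_x/\X_x$ and $\Y_y/\X_y$, hence $\alpha>0$ and $\alpha$ belongs to whichever of these two intervals is under consideration, so $\tau(\alpha)=\sqrt{h(\alpha)}$ also makes sense.

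The first assertion ($\tau=\sigma$ iff $\sigma$ is the real root of $Q$) is pure algebra: since $\tau,\sigma>0$, the equality $\tau=\sigma$ is equivalent to $\sigma^2=h(\sigma)$, and clearing the nonzero factor $\Y_y-\sigma\X_y$ turns this into $\sigma^2(\Y_y-\sigma\X_y)=\sigma\X_x-\Y_x$, i.e.\ $Q(\sigma)=0$; by the previous lemma this means $\sigma=\alpha$. In particular $h(\alpha)=\alpha^2$, a fact used below.

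For the remaining claims I would record two elementary facts. (a) A quotient-rule computation gives $h'(\sigma)=\frac{\Delta}{(\Y_y-\sigma\X_y)^2}$, so on the interval in question $h$ --- and therefore $\tau=\sqrt{h}$ --- is strictly increasing when $\Delta>0$ and strictly decreasing when $\Delta<0$; since, as noted in the proof of Proposition~\ref{VFI_Cor}, \eqref{OUOU_1} is equivalent to ``$\Delta>0,\ \Delta_y>0,\ \Delta_x<0$'' and \eqref{OUOU_2} to ``$\Delta<0,\ \Delta_y<0,\ \Delta_x>0$'', this already yields the monotonicity statements. (b) $Q$ is a cubic with positive leading coefficient $\X_y$ and a single real root $\alpha$, so $Q(\sigma)<0$ for $\sigma<\alpha$ and $Q(\sigma)>0$ for $\sigma>\alpha$. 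I would then combine (b) with the identity $\tau^2-\sigma^2=h(\sigma)-\sigma^2=\frac{Q(\sigma)}{\Y_y-\sigma\X_y}$, obtained by the same clearing of denominators.

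It then remains to assemble the orderings, and here the two cases want slightly different endings. Under \eqref{OUOU_1} one has $\Y_y-\sigma\X_y>0$, so $\mathrm{sign}(\tau^2-\sigma^2)=\mathrm{sign}\,Q(\sigma)=\mathrm{sign}(\sigma-\alpha)$, and since $\tau+\sigma>0$ this gives $\mathrm{sign}(\tau-\sigma)=\mathrm{sign}(\sigma-\alpha)$; hence $\sigma\le\alpha$ forces $\tau\le\sigma$ (so $\tau\le\sigma\le\alpha$) and $\sigma\ge\alpha$ forces $\tau\ge\sigma$ (so $\tau\ge\sigma\ge\alpha$). Under \eqref{OUOU_2}, instead, I would use that $h$ is decreasing together with $h(\alpha)=\alpha^2$: if $\sigma\le\alpha$ then $h(\sigma)\ge h(\alpha)=\alpha^2$, so $\tau\ge\alpha$ and thus $\sigma\le\alpha\le\tau$; if $\sigma\ge\alpha$ then symmetrically $\sigma\ge\alpha\ge\tau$. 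The only mildly delicate point is the bookkeeping of the sign of $\Y_y-\sigma\X_y$ on each of the two admissible intervals and the realization that \eqref{OUOU_1} is most cleanly closed with the sign identity while \eqref{OUOU_2} wants the monotonicity-of-$h$ comparison against the fixed point $\alpha$; beyond that, every step is a one-line manipulation.
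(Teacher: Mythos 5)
Your proposal is correct and follows essentially the same route as the paper: it reduces everything, via Corollary~\ref{tau(sigma)}, to comparing $f(a)=a^2$ with $g(a)=\frac{a\X_x-\Y_x}{\Y_y-a\X_y}$ and the sign of the competition polynomial $Q$. The only difference is that where the paper appeals to a ``graphical analysis of $f$ and $g$'' for the second part, you make that analysis explicit through the derivative $h'(\sigma)=\Delta/(\Y_y-\sigma\X_y)^2$, the identity $\tau^2-\sigma^2=Q(\sigma)/(\Y_y-\sigma\X_y)$, and the sign of $Q$ around its unique real root --- a fully rigorous version of the same argument.
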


\begin{proof}
Under either \eqref{OUOU_1} or \eqref{OUOU_2}, we have $\Delta\neq 0$ and $\Delta_x/\Delta_y<0$. Therefore, from corollary \ref{tau(sigma)}, $$\tau^2=\frac{\sigma\X_x-\Y_x}{\Y_y-\sigma\X_y}=g(\sigma),$$
and thus $\tau^2=\sigma^2$ means $g(\sigma)=f(\sigma)$, which is equivalent to $Q(\sigma)=0$.
The second part of the corollary follows easily from a graphical analysis of $f$ and $g$, considering the cases \eqref{OUOU_1} and \eqref{OUOU_2}.
\end{proof}

Note that the previous proposition says that populations with equal secondary and tertiary sex ratios are in fact non-generic cases.
Moreover, while the secondary sex ratio has bounds given either by condition \eqref{OUOU_1} or by condition \eqref{OUOU_2},
the tertiary sex ratio can assume, \emph{a priori}, any positive value.

The next proposition states that if $x$ and $y$ are positive and sufficiently near the origin, then the vectors $\Phi_I(x,y)$ and $\Phi_{II}(x,y)$
have slopes near to $\sigma$. The proof is straightforward and will be omitted.

\begin{prop}\label{VetcorFieldAtOrigin}
If $m>0$, then ${\displaystyle \lim_{x\to 0^+} \frac{\Phi_I(x,mx)}{||\Phi_I(x,mx)||}=\lim_{x\to 0^+} \frac{\Phi_{II}(x,mx)}{||\Phi_{II}(x,mx)||}=\frac{(\rho,\ 1-\rho)}{||(\rho,\ 1-\rho)||}}=:\vec{\vartheta}$.
\end{prop}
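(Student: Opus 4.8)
The plan is to exploit the (near-)homogeneity of $\Phi_I$ and $\Phi_{II}$ along the ray $y=mx$. First I would substitute $y=mx$ and pull out the common factor $x$. Setting $c_1:=\X_x+\X_y m^2$ and $c_2:=\Y_x+\Y_y m^2$, a direct computation gives
\begin{equation*}
\Phi_I(x,mx)=x\bigl(\rho-c_1 x,\ (1-\rho)-c_2 x\bigr),\qquad \Phi_{II}(x,mx)=x\bigl(\rho r m-c_1 x,\ (1-\rho)r m-c_2 x\bigr).
\end{equation*}
Since the limit is taken as $x\to 0^+$, we may assume $x>0$, so that $\|\Phi_I(x,mx)\|=x\,\|(\rho-c_1 x,\ (1-\rho)-c_2 x)\|$ and similarly for $\Phi_{II}$; the positive scalar $x$ therefore cancels in the normalization, leaving
\begin{equation*}
\frac{\Phi_I(x,mx)}{\|\Phi_I(x,mx)\|}=\frac{(\rho-c_1 x,\ (1-\rho)-c_2 x)}{\|(\rho-c_1 x,\ (1-\rho)-c_2 x)\|},\qquad \frac{\Phi_{II}(x,mx)}{\|\Phi_{II}(x,mx)\|}=\frac{(\rho r m-c_1 x,\ (1-\rho)r m-c_2 x)}{\|(\rho r m-c_1 x,\ (1-\rho)r m-c_2 x)\|}.
\end{equation*}

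Next I would pass to the limit. As $x\to 0^+$ the numerator of the first quotient converges to $(\rho,1-\rho)$ and that of the second to $r m\,(\rho,1-\rho)$. Because $\rho\in(0,1)$, the vector $(\rho,1-\rho)$ is nonzero; hence both numerators stay bounded away from $0$ for $x$ small, which both legitimizes the quotients displayed above in a neighbourhood of the origin and permits invoking continuity of the map $v\mapsto v/\|v\|$ at $(\rho,1-\rho)$ and at $r m\,(\rho,1-\rho)$. Since $r m>0$, the two limits coincide and equal $(\rho,1-\rho)/\|(\rho,1-\rho)\|=\vec{\vartheta}$, which is the claimed identity.

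There is essentially no obstacle here: the argument reduces to factoring out $x$ and using continuity of elementary operations. The only point deserving a word of care is the licitness of dividing by the norm near the origin, and this is exactly what the standing hypothesis $\rho\in(0,1)$ secures, since it forces $(\rho,1-\rho)\neq(0,0)$; I would state that explicitly so that the limit computation is unambiguous.
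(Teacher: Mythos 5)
Your proof is correct: factoring out $x$ along the ray $y=mx$ and using continuity of $v\mapsto v/\|v\|$ at the nonzero limits $(\rho,1-\rho)$ and $rm(\rho,1-\rho)$ is exactly the straightforward computation the paper alludes to when it omits the proof. The care you take about $\rho\in(0,1)$ (and $rm>0$) guaranteeing the normalization is legitimate near the origin is the only point worth stating, and you state it.
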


In other words, as $ (x,y)\in(\R_+)^2 $ approaches the origin along the straight line $y=mx$ (with positive $m$),
both the vector fields $ \Phi_I $ and $ \Phi_{II} $ tend to have the orientation of the vector $(\rho,\ 1-\rho)$.
The behavior analysis of the vector fields on such a straight line will be used in theorem \ref{noCycles}
to find sufficient conditions under which the vector fields do not admit cycles and spirals.

\section{Local and global behavior}\label{sec.behavior}

Now, we will examine the behavior of the solutions near the singular points. To do that,
we will compute the Jacobian matrices of the vector fields and determine their eigenvalues. Note that the Jacobian matrices of the vector fields $\Phi_I$ and
$\Phi_II$ are, respectively,
\begin{equation*}
\begin{array}{ccc}
D\Phi_I(x,y)=\left[
               \begin{array}{cc}
                 \rho - 2\X_x x    & -2\X_y y \\\\
                 (1-\rho)- 2\Y_x x & -2\Y_y y \\
               \end{array}
             \right]

& \text{ and } &

D\Phi_{II}(x,y)=\left[
               \begin{array}{cc}
                 -2\X_x x & \rho r-2\X_y y \\\\
                 -2\Y_x x & (1-\rho)r-2\Y_y y\\
               \end{array}
             \right].
\end{array}
\end{equation*}

The table summarizes the signs of the trace, determinant and discriminant of the Jacobians $D\Phi_{I}$ and $D\Phi_{II}$ for each singularity of the respective vector field.

\begin{table}[h]
\begin{center}
\tiny
\noindent\begin{tabular}{|p{.65cm}|p{1.35cm}|p{4.25cm}|p{1.5cm}|p{6cm}|}
  \hline
       & \begin{center}Singularity\end{center}  & \begin{center}Trace\end{center} & \begin{center}Determinant\end{center} & \begin{center}Discriminant\end{center}\\
  \hline
  &&&&\\
& $(x_I,y_I)$   & $ sign\Bigl(\rho-2(\X_xx_I+\Y_yy_I)\Bigr)$ & $sign\Bigl(\Delta_y\Bigr)$ & $sign\Bigl([\rho-2(\X_xx_I+\Y_yy_I)]^2-8\Delta x_Iy_I\Bigr)$\\
 &&&&\\
\normalsize$\Phi_I$\tiny & $(x_I,-y_I)$   & $sign\Bigl(\rho-2(\X_xx_I-\Y_yy_I)\Bigr)$ & -$sign\Bigl(\Delta_y\Bigr)$ & $sign\Bigl([(\rho-2(\X_xx_I-\Y_yy_I)]^2+8\Delta x_Iy_I\Bigr)$\\
 &&&&\\
& $(0,0)$   & $ sign(1) $ & 0 & $ sign(1) $ \\
 &&&&\\
  \hline
&&&&\\
& $(x_{II},y_{II})$   & $sign\Bigl((1-\rho)r-2(\X_xx_{II}+\Y_yy_{II})\Bigr)$ & $sign\Bigl(\Delta_y\Bigr)$ & $sign\Bigl([(1-\rho)r-2(\X_xx_{II}+\Y_yy_{II})]^2-8\Delta x_{II}y_{II}\Bigr)$\\
 &&&&\\
\normalsize$\Phi_{II}$\tiny & $(-x_{II},y_{II})$   & $sign\Bigl((1-\rho)r+2(\X_xx_{II}-\Y_yy_{II})\Bigr)$ & -$sign\Bigl(\Delta_y\Bigr)$ & $sign\Bigl([(1-\rho)r+2(\X_xx_{II}-\Y_yy_{II})]^2+8\Delta x_{II}y_{II}\Bigr)$\\
 &&&&\\
& $(0,0)$   & $ sign(1) $ & 0 & $ sign(1) $\\
 &&&&\\
 \hline
\end{tabular}
\caption{\small{Elements for the classification of the singularities of both vector fields $\Phi_I$ and $\Phi_{II}$.}}\label{table1}
\end{center}
\end{table}
\normalsize

We highlight that the signs in the above table do not depend on $r$. In fact, from $x_{II}=\frac{\sqrt{|\Delta_x\Delta_y|}}{|\Delta|}r$ and $y_{II}=-\frac{\Delta_x}{\Delta}r$ we can see that $r$ does not affect
the signs of the trace, determinant and discriminant of $D\Phi_{II}(x_{II},y_{II})$ and $D\Phi_{II}(-x_{II},y_{II})$. The next result follows
directly from such a fact and Table~\ref{table1}.

\begin{theo}\label{LocalCharact_1}
For $\Phi_I$ as well as for $\Phi_{II}$, the singularity types do not depend on $r$. Moreover,
\begin{enumerate}
\item for both vector fields, (0,0) is a nonhyperbolic singularity for which the vector $ (\rho, 1 - \rho) $
defines a repulsive direction in the phase space;
\item $(x_I,-y_I)$ and $(-x_{II},y_{II})$ are saddle points if, and only if, \eqref{OUOU_1} holds;
\item $(x_I,y_I)$ and $(x_{II},y_{II})$ are saddle points if, and only if, \eqref{OUOU_2} holds.
\end{enumerate}
\end{theo}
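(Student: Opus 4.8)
The plan is to reduce the whole statement to the sign data collected in Table~\ref{table1}, together with the translation between conditions \eqref{OUOU_1}--\eqref{OUOU_2} and the signs of $\Delta,\Delta_x,\Delta_y$ already established in the proof of Proposition~\ref{VFI_Cor}. I would begin by recalling that the entries of Table~\ref{table1} are obtained simply by substituting the coordinates \eqref{simp_sing_I}--\eqref{simp_sing_II} into the Jacobians $D\Phi_I$ and $D\Phi_{II}$ displayed above and computing trace, determinant and discriminant; this is routine, and the only case worth singling out is the origin, where the two Jacobians reduce to matrices with eigenvalues $\{0,\rho\}$ and $\{0,(1-\rho)r\}$ respectively. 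Since $x_{II}$ and $y_{II}$ are each proportional to $r$ while $\Phi_I$ does not involve $r$ at all, one checks (exactly as in the paragraph preceding the theorem) that every sign appearing in Table~\ref{table1} is independent of $r$; as the topological type of a hyperbolic singularity is determined by the signs of the trace, determinant and discriminant of the linearization, this already yields the claim that the singularity types do not depend on $r$.

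For part (1), the eigenvalues $\{0,\rho\}$ (resp. $\{0,(1-\rho)r\}$) show that $(0,0)$ is nonhyperbolic for $\Phi_I$ (resp. $\Phi_{II}$), and a one-line computation gives that in both cases the eigenspace associated with the positive eigenvalue is spanned by $(\rho,1-\rho)$. I would then invoke the unstable manifold theorem: because the nonzero eigenvalue is positive, there is a one-dimensional unstable manifold through the origin tangent to $(\rho,1-\rho)$, which is precisely the sense in which $(\rho,1-\rho)$ is a repulsive direction in the phase space. (One could alternatively combine this with Proposition~\ref{VetcorFieldAtOrigin} to obtain a more hands-on description of the local picture near $(0,0)$, but the unstable manifold theorem suffices.)

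For parts (2) and (3), observe that in the regime under consideration the singularities with non-null coordinates exist, so by Theorem~\ref{VFI_Theo} we have $\Delta\neq 0$ and $\Delta_x\Delta_y<0$; in particular $\Delta_y\neq 0$, hence each of the four singularities $(x_I,\pm y_I)$ and $(\pm x_{II},y_{II})$ is hyperbolic. For a hyperbolic singularity of a planar vector field, being a saddle is equivalent to the Jacobian having negative determinant (real eigenvalues of opposite sign), and by Hartman--Grobman this passes from the linearization to the field itself. Reading off the determinant column of Table~\ref{table1}, the determinant at $(x_I,y_I)$ and at $(x_{II},y_{II})$ has the sign of $\Delta_y$, while the determinant at $(x_I,-y_I)$ and at $(-x_{II},y_{II})$ has the sign of $-\Delta_y$. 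Finally, since all competition parameters are assumed strictly positive and $\Delta_x\Delta_y<0$, the two mutually exclusive alternatives $\Delta_y>0$ and $\Delta_y<0$ are exactly conditions \eqref{OUOU_1} and \eqref{OUOU_2} respectively (the equivalence recorded in the proof of Proposition~\ref{VFI_Cor}). Combining: $(x_I,-y_I)$ and $(-x_{II},y_{II})$ are saddles iff $\Delta_y>0$ iff \eqref{OUOU_1} holds, and $(x_I,y_I)$ and $(x_{II},y_{II})$ are saddles iff $\Delta_y<0$ iff \eqref{OUOU_2} holds.

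The only genuine obstacle here is bookkeeping: verifying the determinant entries of Table~\ref{table1} (and, for the full table, the trace and discriminant entries) through the substitution described above, and being slightly careful at the nonhyperbolic origin, where the repulsive direction must be extracted from the unstable manifold rather than from a naive eigenvalue count.
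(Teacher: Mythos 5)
Your argument is correct and follows essentially the same route as the paper, which simply reads the theorem off Table~\ref{table1} together with the observation that the relevant signs are independent of $r$ and the sign translation $\eqref{OUOU_1}\Leftrightarrow(\Delta>0,\Delta_y>0,\Delta_x<0)$, $\eqref{OUOU_2}\Leftrightarrow(\Delta<0,\Delta_y<0,\Delta_x>0)$ from Proposition~\ref{VFI_Cor}; your added details (eigenvector $(\rho,1-\rho)$ at the origin, unstable manifold, determinant criterion for saddles) are exactly the bookkeeping the paper leaves implicit. One small imprecision: $\Delta_y\neq 0$ (nonzero determinant) does not by itself make the four singularities hyperbolic (purely imaginary eigenvalues are not excluded when the determinant is positive), but this does not affect your conclusion, since a negative determinant still forces a saddle and a positive determinant still excludes one.
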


Note that the previous theorem gives a partial characterization of the local behavior.
The next theorem characterizes the global behavior of solutions at the first quadrant
when $(x_I,y_I)$ and $(x_{II},y_{II})$ are saddle points.

\begin{theo}\label{saddle}
If \eqref{OUOU_2} holds, except on the stable manifold of the respective saddle point,
all solutions of the nonsmooth vector fields $\Phi_I \mathbf{1}_{\R_+^* \times \R_+^*} $ and $\Phi_{II} \mathbf{1}_{\R_+^* \times \R_+^*}$
vanish as time goes to $\infty$.
\end{theo}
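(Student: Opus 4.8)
The plan is to prove the statement for $\Phi_I$ and for $\Phi_{II}$ by one and the same argument, based on the Poincar\'e--Bendixson theorem once periodic orbits and separatrix loops have been excluded. Work in the open first quadrant $Q:=(\R_+^*)^2$ and fix $\Phi_I$ (the case of $\Phi_{II}$ is entirely analogous, invoking theorem~\ref{Rel_bet_sing} only to locate the corresponding saddle). Write $V=(x_I,y_I)$: under~\eqref{OUOU_2} this is a hyperbolic saddle by theorem~\ref{LocalCharact_1}, and by proposition~\ref{VFI_Cor} it is the only singularity of $\Phi_I$ in $Q$, the other two singularities being the origin -- on $\partial Q$ -- and a point with a negative coordinate. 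Let $W^s$ denote the one-dimensional stable manifold of $V$. We must show that every maximal solution of $\Phi_I\mathbf{1}_{\R_+^*\times\R_+^*}$ starting in $Q\setminus W^s$ has its $\omega$-limit set contained in the coordinate axes $\{xy=0\}$; biologically, at least one gender is driven to extinction.

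First I would construct a trapping region. Because the quadratic competition terms dominate the linear recruitment terms for $\|(x,y)\|$ large, there is $M>0$ -- for instance any sufficiently large $M>\max\{\rho/\X_x,\,(1-\rho)/\Y_y\}$ -- such that $\dot x<0$ wherever $x=M$ in $\overline Q$ and $\dot y<0$ wherever $y=M$ in $\overline Q$. Hence no solution escapes to infinity, and while a solution remains in $Q$ it is eventually confined to $[0,M]^2$. Consequently each solution either (a) meets $\{xy=0\}$ at some finite time and is thereafter constant -- the ``vanishing'' alternative -- or (b) stays in $Q$ for all $t\ge 0$, so that its $\omega$-limit set $\omega$ is a nonempty, compact, connected, invariant subset of $\overline Q\cap[0,M]^2$, inside which the only singularities of $\Phi_I$ are the origin and $V$.

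For case (b) I would first rule out periodic orbits by a Poincar\'e index count: a periodic orbit of $\Phi_I$ contained in $\overline Q$ bounds, by convexity of $\overline Q$, a region whose closure lies in $\overline Q$; the origin cannot be interior to such a curve, since the argument along the curve stays in $[0,\pi/2]$ and hence the curve has winding number $0$ about the origin, so the only singularity that could be enclosed is $V$, of index $-1$ -- impossible, because the sum of enclosed indices must be $+1$. With $\Phi_I$ smooth and having finitely many singularities, the Poincar\'e--Bendixson theorem then forces $\omega$ to be a single singularity or a union of singularities and connecting orbits. If $\omega$ contained a regular point, the structure of $\omega$-limit sets would produce inside $\omega$ a separatrix loop, i.e.\ a simple closed curve $\Gamma\subseteq\overline Q$ formed by singularities and connecting orbits; but a separatrix loop must enclose a singularity in its interior, and no closed curve in $\overline Q$ can enclose the origin while $V$ lies on $\Gamma$ -- a contradiction. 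Hence $\omega$ is a single singularity.

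In case (b) we therefore have $\omega=\{(0,0)\}$, so the solution converges to the origin and vanishes, or $\omega=\{V\}$, so the solution converges to the saddle and hence lies on $W^s$, which is excluded by hypothesis. Together with case (a) this gives the theorem for $\Phi_I$, and the same argument yields it for $\Phi_{II}$. The trapping region, the transversality of the fields to the axes, and the index count for cycles are routine; I expect the main obstacle to be the final elimination step -- showing that no separatrix loop can occur, whether a homoclinic loop at $V$, a homoclinic loop at the origin, or a heteroclinic cycle joining them. This is delicate because the origin is a nonhyperbolic, saddle--node type singularity lying on $\partial Q$: one must verify that the Poincar\'e--Bendixson dichotomy and the principle ``a separatrix loop encloses an equilibrium'' survive in this degenerate boundary setting, for which the repulsive direction $(\rho,1-\rho)$ at the origin from theorem~\ref{LocalCharact_1} organizing the local phase portrait is the essential input.
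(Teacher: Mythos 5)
Your outline is not complete: the step you yourself defer --- ruling out the ``graphic'' alternative of Poincar\'e--Bendixson (a homoclinic loop at $(x_I,y_I)$, a loop through the origin, or a heteroclinic cycle joining the two) --- is a genuine gap, not a routine verification, and without it the dichotomy ``$\omega=\{(0,0)\}$ or $\omega=\{(x_I,y_I)\}$'' does not follow. The difficulty is compounded by the fact that the field in the statement is $\Phi_I\mathbf{1}_{\R_+^*\times\R_+^*}$: every point of the coordinate axes is an equilibrium of this discontinuous field, so it has a continuum of non-isolated singularities and the classical Poincar\'e--Bendixson trichotomy (which you invoke for ``$\Phi_I$ smooth with finitely many singularities'') does not apply to it as stated; you would have to argue instead with the smooth field $\Phi_I$ along orbits that remain in the open quadrant, check that regular axis points cannot lie in the $\omega$-limit (their $\Phi_I$-orbits leave $\overline{(\R_+)^2}$, contradicting invariance), and then kill the remaining graphics, e.g.\ by the Jordan-curve/index argument you used for cycles (any graphic in the closed quadrant bounds a region contained, by convexity, in the closed quadrant, hence cannot enclose the origin, and cannot enclose only the saddle). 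None of this is carried out, and your closing sentence explicitly flags it as unresolved, so as written the proposal proves the theorem only modulo its hardest step. The parts you do give (trapping box, index exclusion of periodic orbits, convergence to a hyperbolic saddle implying membership in $W^s$) are fine.

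For comparison, the paper's proof is a direct nullcline argument that sidesteps limit-set theory almost entirely: the $\dot x$- and $\dot y$-nullclines of $\Phi_I$ are the ellipses \eqref{ellipse1} and \eqref{ellipse2}, which under \eqref{OUOU_2} meet only at the three singular points, with the horizontal axis of \eqref{ellipse1} longer than that of \eqref{ellipse2}; outside both ellipses $\dot x<0$ and $\dot y<0$, and the resulting sign configuration of $\Phi_I$ on the elliptical arcs in the first quadrant forces every orbit not on the stable manifold of the saddle to cross a coordinate axis (similarly for $\Phi_{II}$ with \eqref{ellipse3} and \eqref{ellipse4}). If you prefer your Poincar\'e--Bendixson route, it can be completed along the lines sketched above, but you must actually supply the exclusion of separatrix loops and the justification of the limit-set machinery in this degenerate boundary setting; alternatively, adopting the nullcline geometry makes those issues moot.
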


\begin{proof} For the vector field $ \Phi_I $, note that $\dot{x}$ vanishes on the ellipse
 \begin{equation}\label{ellipse1}\frac{\left(x-\frac{\rho}{2\X_x}\right)^2}{\frac{\rho^2}{4\X_x^2}}+ \frac{y^2}{\frac{\rho^2}{4\X_x\X_y}}=1,\end{equation}
 while $\dot{y}$ vanishes on the ellipse
\begin{equation}\label{ellipse2}\frac{\left(x-\frac{(1-\rho)}{2\Y_x}\right)^2}{\frac{(1-\rho)^2}{4\Y_x^2}}+ \frac{y^2}{\frac{(1-\rho)^2}{4\Y_x\Y_y}}=1.\end{equation}
Condition \eqref{OUOU_2} implies that the horizontal axis of ellipse \eqref{ellipse1} is greater than the horizontal axis of ellipse \eqref{ellipse2},
and that both ellipses have only three intersection points: $(0,0)$, $(x_I,y_I)$ and $(x_I,-y_I)$.
If condition \eqref{OUOU_2} holds, on the restriction to the first quadrant of the ellipses \eqref{ellipse1} and \eqref{ellipse2}, the vector field $\Phi_I$ looks like shown in figure~\ref{fig.semiellipse1}
(regardless of the parameter values). Therefore, a solution starting at a point at the first quadrant
which is not on the stable manifold of $(x_I,y_I)$ will eventually cross one of the canonical axes (since $(x_I,y_I)$ is the unique singularity in
that quadrant, the vector field does not allow cycles there; besides, outside both ellipses the derivatives $\dot{x}$ and $\dot{y}$ are negative).

The proof for the vector field $\Phi_{II}$ follows the same outline, but using that for $\Phi_{II}$ the derivative $\dot{x}$ vanishes on the ellipse
 \begin{equation}\label{ellipse3} \frac{x^2}{\frac{\rho^2r^2}{4\X_x\X_y}}+\frac{\left(y-\frac{\rho r}{2\X_y}\right)^2}{\frac{\rho^2r^2}{4\X_y^2}}=1\end{equation}
 while $\dot{y}$ vanishes on the ellipse
\begin{equation}\label{ellipse4}\frac{x^2}{\frac{(1-\rho)^2r^2}{4\Y_x\Y_y}}+\frac{\left(y-\frac{(1-\rho) r}{2\Y_y}\right)^2}{\frac{(1-\rho)^2r^2}{4\Y_y^2}}=1,\end{equation}
and that \eqref{OUOU_2} implies that
the vertical axis of ellipse \eqref{ellipse3} is smaller than the vertical axis of ellipse~\eqref{ellipse4} (see figure~\ref{fig.semiellipse2}).
\begin{figure}[!htb]
\begin{center}
\includegraphics[scale=.2]{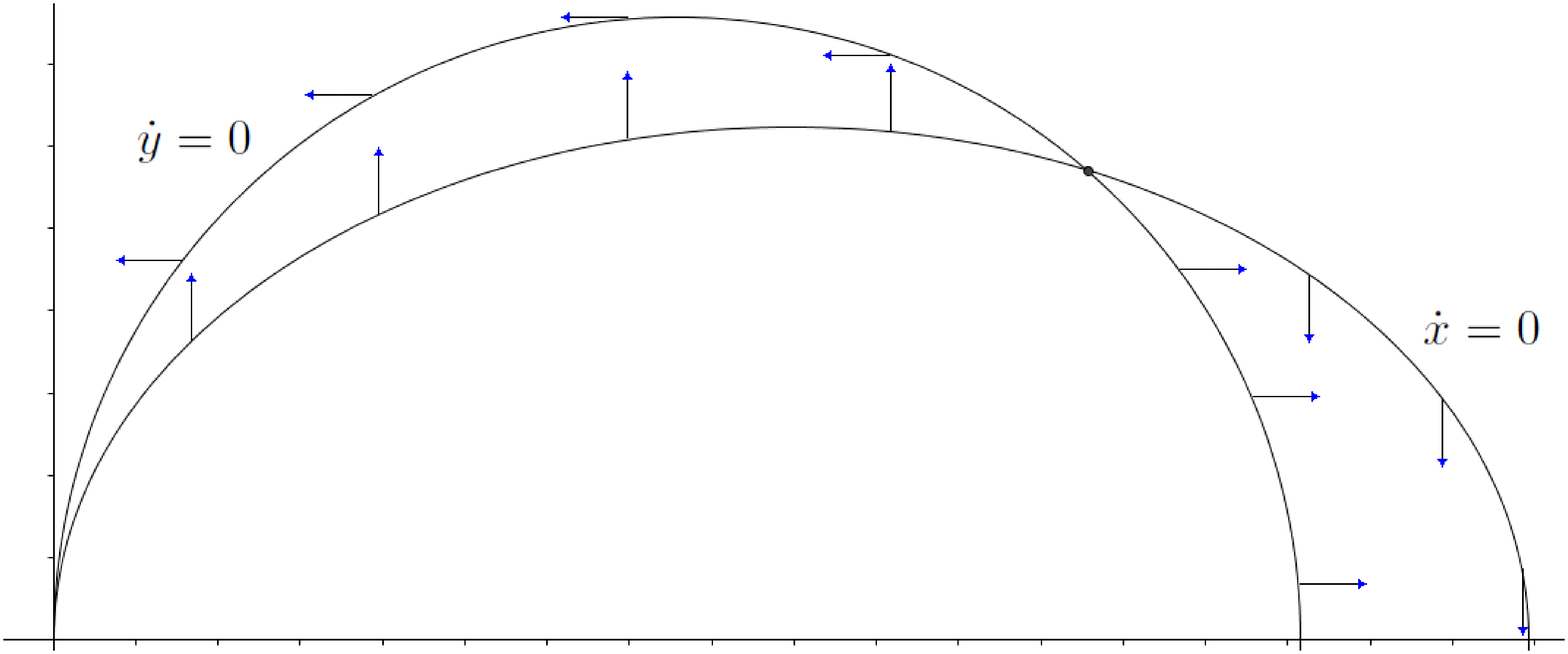}
\caption{\small{Sketch of the behavior of $ \Phi_I $ at the first quadrant.}}
\label{fig.semiellipse1}
\end{center}
\end{figure}
\begin{figure}[!htb]
\begin{center}
\includegraphics[scale=.2]{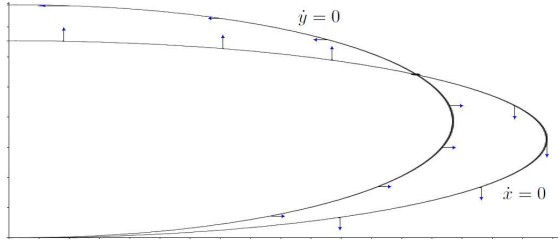}
\caption{\small{Sketch of the behavior of $ \Phi_{II} $ at the first quadrant.}}
\label{fig.semiellipse2}
\end{center}
\end{figure}
\end{proof}

We may notice that the only non-competition parameter on which the singularity type depends is $\rho$. Thus, the secondary sex ratio, $\sigma=(1-\rho)/\rho$, becomes a natural choice of parameter in terms of which we should classify the population behavior. Although the highly nonlinear interdependence of the parameters makes it quite hard to determine the parameter sets that correspond to each possible sign to the entries in Table~\ref{table1}, it is possible to verify that under \eqref{OUOU_1} as $\sigma$ increases the singularity type of $(x_I,y_I)$ changes from stable node to unstable node, passing through stable spiral and unstable spiral. In fact, consider the straight line $\mathcal{L}$ given by $\rho-2(\X_xx+\Y_yy)=0$ and the ellipse $\mathcal{E}$ defined by  $[\rho-2(\X_xx+\Y_yy)]^2-8\Delta xy=0$. Note that $\mathcal{E}$ is contained in the first quadrant of $\R^2$ and touches the axes at the same points that $\mathcal{L}$ crosses the axes, that is, at the points $\left(0, \frac{\rho}{2\Y_y}\right)$ and $\left(\frac{\rho}{2\X_x}, 0\right)$.

Supposing all the parameters are fixed but $\rho$, then $(x_I,y_I)$, $\mathcal{L}$ and $\mathcal{E}$ move on the first quadrant as $\sigma$ changes.
Thus, we can determine the singularity type of $(x_I,y_I)$ by knowing its relative position with respect to $\mathcal{L}$ and $\mathcal{E}$ for each
value of $\sigma$. Indeed,
$$\begin{array}{lcl}
\text{trace}(D\Phi_I(x_I,y_I))<0 &\Longleftrightarrow & \text{$(x_I,y_I)$ is at the right side of $\mathcal{L}$},\\\\
\text{trace}(D\Phi_I(x_I,y_I))=0 &\Longleftrightarrow & \text{$(x_I,y_I)$ is on $\mathcal{L}$},\\\\
\text{trace}(D\Phi_I(x_I,y_I))>0 &\Longleftrightarrow & \text{$(x_I,y_I)$ is at the left side of $\mathcal{L}$},
\end{array}$$
while
$$\begin{array}{lcl}
\text{discriminant}(D\Phi_I(x_I,y_I))<0 &\Longleftrightarrow & \text{$(x_I,y_I)$ is inside the region defined by $\mathcal{E}$},\\\\
\text{discriminant}(D\Phi_I(x_I,y_I))=0 &\Longleftrightarrow & \text{$(x_I,y_I)$ is on $\mathcal{E}$},\\\\
\text{discriminant}(D\Phi_I(x_I,y_I))>0 &\Longleftrightarrow & \text{$(x_I,y_I)$ is outside the region defined by $\mathcal{E}$}.
\end{array}$$

Now, observe that if $\sigma\to \frac{\Y_x}{\X_x}^+$, then $x_I \to \frac{\rho}{\X_x}$ and $y_I \to 0^+$.
According to Table~\ref{table1}, it follows that $\text{trace}(D\Phi_I(x_I,y_I))<0$ and $\text{discriminant}(D\Phi_I(x_I,y_I))>0$, and therefore
$(x_I,y_I)$ is placed at the right side of $\mathcal{L}$ and outside of the region defined by $\mathcal{E}$. On the other hand,
if $\sigma\to \frac{\Y_y}{\X_y}^-$, then $x_I \to 0^+$ and $y_I \to 0^+$, which guarantees that $\text{trace}(D\Phi_I(x_I,y_I))>0$ and
$\text{discriminant}(D\Phi_I(x_I,y_I))>0$, and hence $(x_I,y_I)$ is placed at the left side of $\mathcal{L}$ and outside of the region defined by $\mathcal{E}$.

Since $(x_I,y_I)$, $\mathcal{L}$ and $\mathcal{E}$ move continuously on the first quadrant of $\R^2$ as $\sigma$ changes, then there must exist values of $\sigma$ for which $(x_I,y_I)$ is placed in any relative position with respect to $\mathcal{L}$ and $\mathcal{E}$, that is, generically
$(x_I,y_I)$ can be a singularity of any of the four announced types. Furthermore, it is possible to verify that as $\sigma$ increases the point $(x_I,y_I)$ passes from the right of $\mathcal{L}$ to the left of $\mathcal{L}$, and from the outside of $\mathcal{E}$ to the inside of $\mathcal{E}$ and once again to the outside of $\mathcal{E}$ (see figure~\ref{GlobalAnal}).

By an analogous analysis, comparing the relative position of $(x_{II},y_{II})$ with respect to the straight line $(1-\rho)r-2(\X_xx+\Y_yy)=0$ and the ellipse $[(1-\rho)r-2(\X_xx+\Y_yy)]^2-8\Delta xy=0$, we can verify that,
as $\sigma$ increases, the singularity type of $(x_{II},y_{II})$ changes from unstable node to stable node, passing through unstable spiral and stable spiral.

\begin{center}
\begin{figure}[ht]
\begin{center}
\begin{minipage}[b]{0.4\linewidth}
\includegraphics[width=\textwidth]{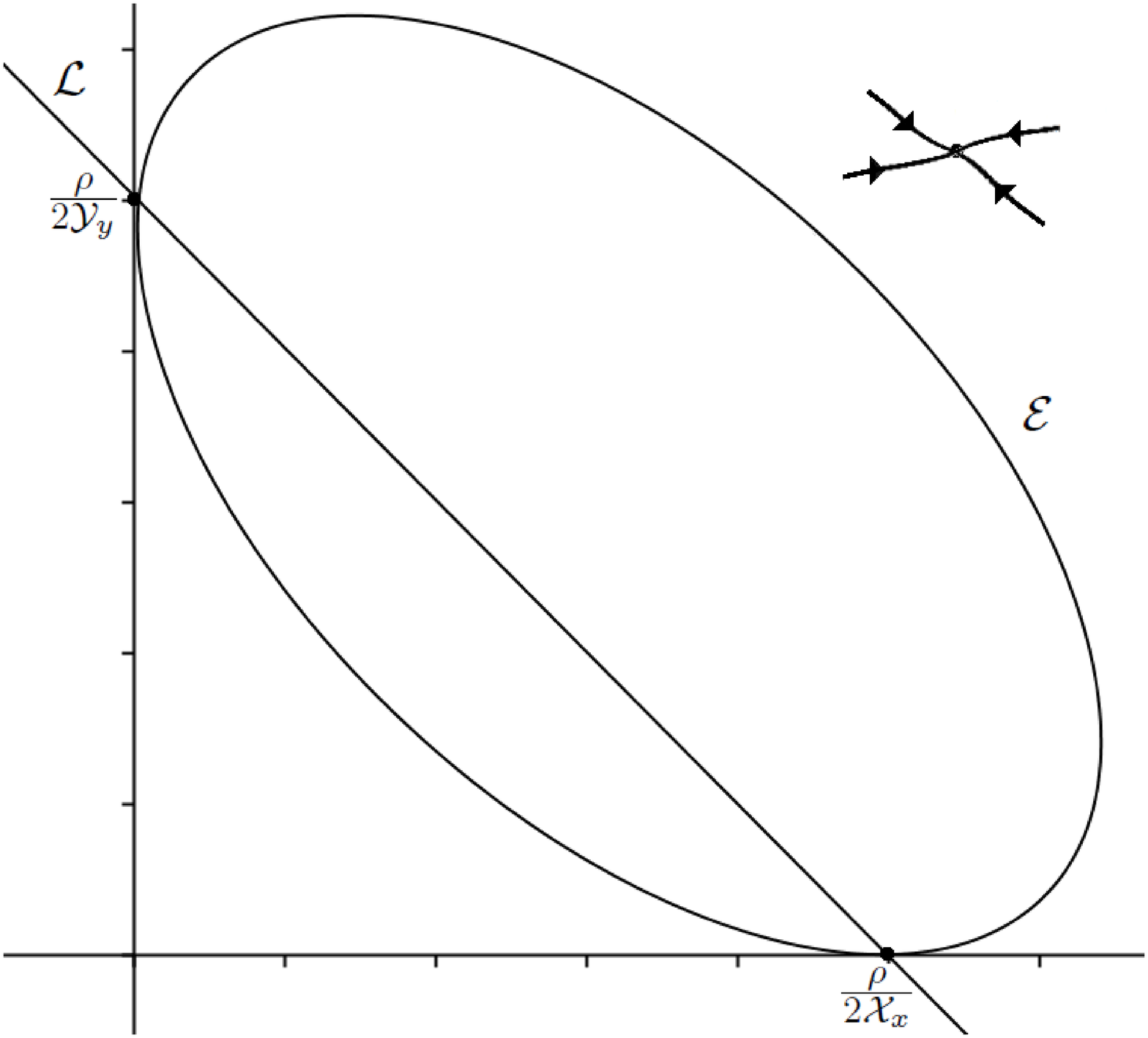}
\end{minipage}
\begin{minipage}[b]{0.4\linewidth}
\includegraphics[width=\textwidth]{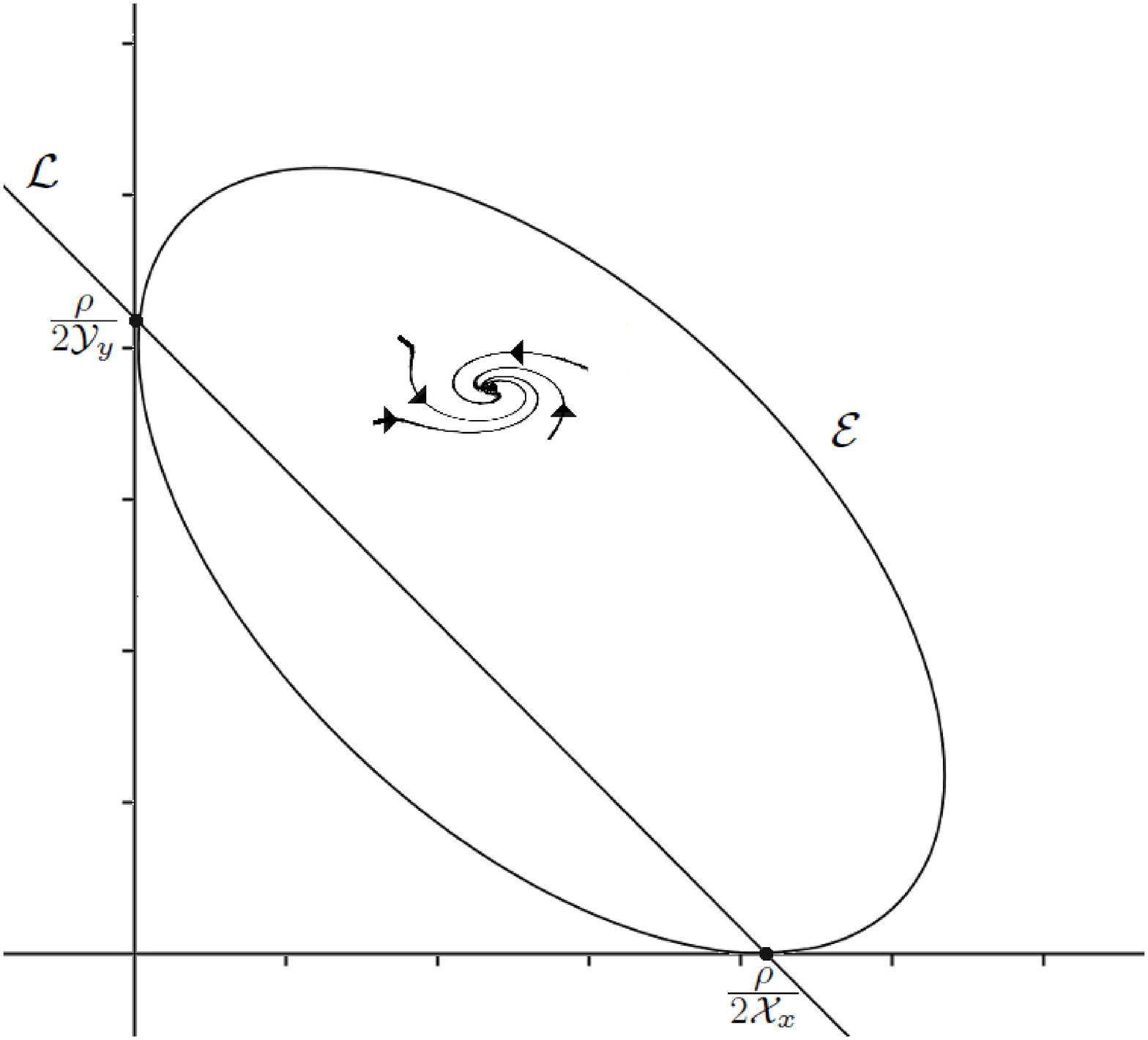}
\end{minipage}\\
\begin{minipage}[b]{0.4\linewidth}
\includegraphics[width=\textwidth]{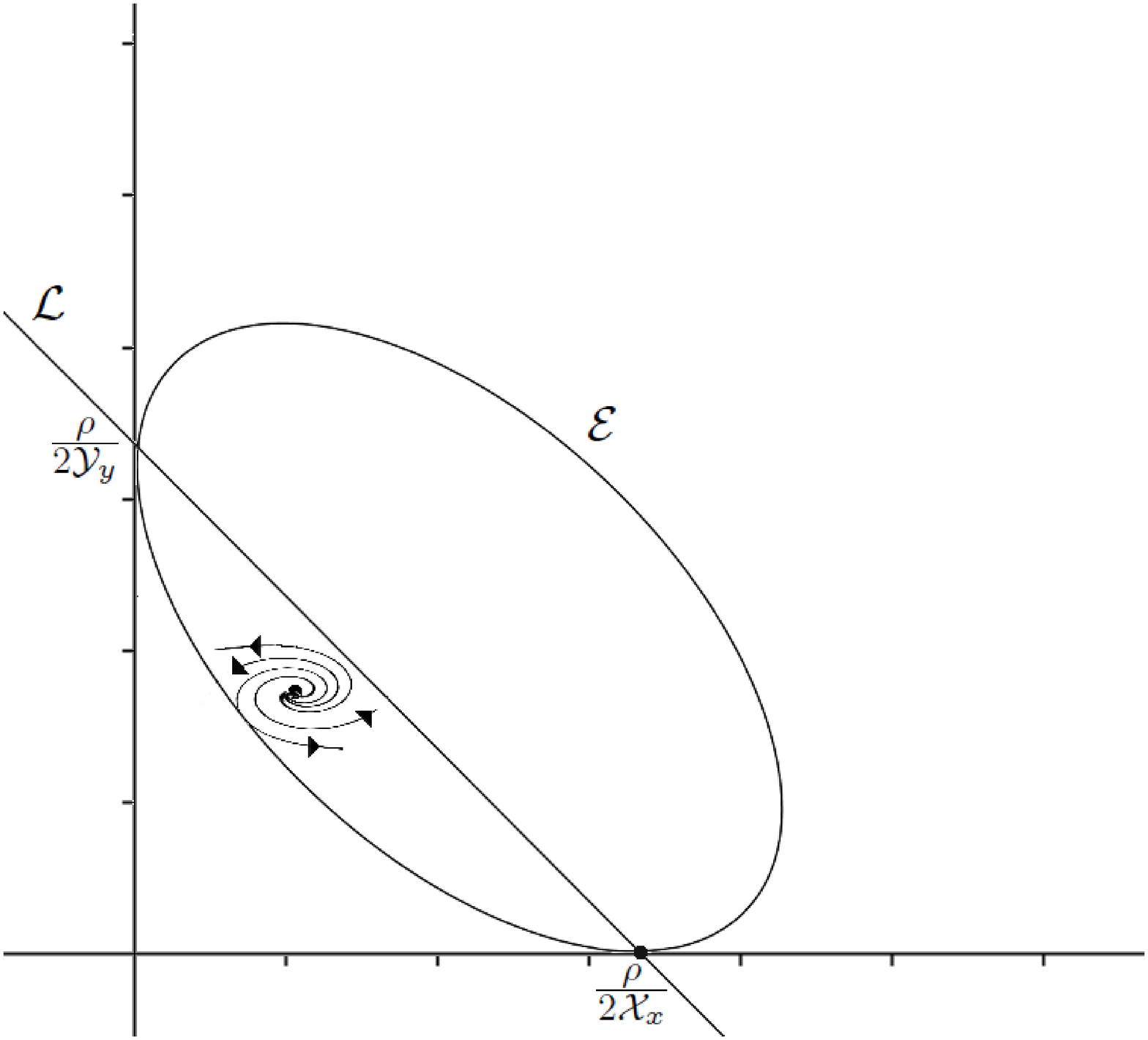}
\end{minipage}
\begin{minipage}[b]{0.4\linewidth}
\includegraphics[width=\textwidth]{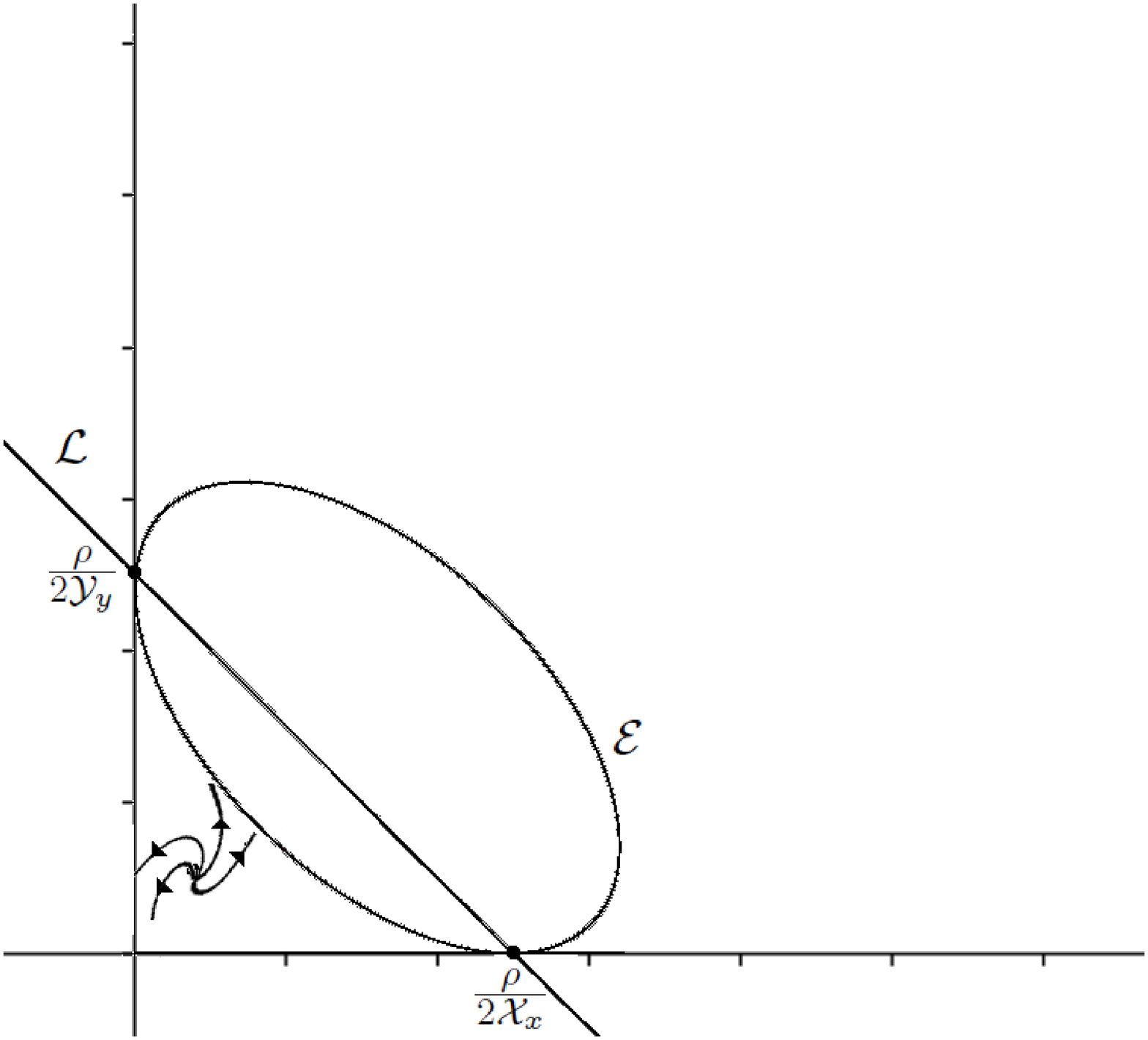}
\end{minipage}
\end{center}
\caption{\small{From left above to right below, as $\sigma$ increases,
the singularity type of $ (x_I, y_I) $ changes from stable node to un\-sta\-ble node,
passing through stable spiral and unstable spiral.}}\label{GlobalAnal}
\end{figure}
\end{center}

The next results present some characterizations of local and global behaviors of positive solutions, based on the relationship between the secondary sex ratio and the competition parameters.

\begin{theo}\label{SingI_II} Under condition \eqref{OUOU_1}, we have that

\smallskip

\noindent I.i.  {\color{white}  } $ {\displaystyle\frac{\Y_x}{\X_x}
< \sigma < \frac{\left(\sqrt{\X_x\X_y+\Y_y^2} + \Y_y\right)^2\Y_y+\Y_x\X_y^2}{\left(\sqrt{\X_x\X_y+\Y_y^2} + \Y_y\right)^2\X_y+\X_x\X_y^2}
\Longrightarrow (x_I,y_I) \text{ is stable (spiral or node)}} $;

\noindent I.ii. {\color{white} } $ {\displaystyle \frac{\left(\sqrt{\X_x\X_y+\Y_y^2} + \Y_y\right)^2\Y_y+\Y_x\X_y^2}{\left(\sqrt{\X_x\X_y+\Y_y^2} + \Y_y\right)^2\X_y+\X_x\X_y^2}
< \sigma < \frac{\Y_y}{\X_y}
\Longrightarrow (x_I,y_I) \text{ is unstable (spiral or node)}} $;

\noindent II.i. {\color{white} } $ {\displaystyle \frac{\Y_x}{\X_x}
< \sigma < \frac{\left(\sqrt{\Y_x\Y_y+\X_x^2} - \X_x\right)^2\Y_y+\Y_x\Y_y^2}{\left(\sqrt{\Y_x\Y_y+\X_x^2} - \X_x\right)^2\X_y+\X_x\Y_y^2}
\Longrightarrow (x_{II},y_{II})\text{ is unstable (spiral or node)}} $;

\noindent II.ii. $ {\displaystyle \frac{\left(\sqrt{\Y_x\Y_y+\X_x^2} - \X_x\right)^2\Y_y+\Y_x\Y_y^2}{\left(\sqrt{\Y_x\Y_y+\X_x^2} - \X_x\right)^2\X_y+\X_x\Y_y^2}
< \sigma < \frac{\Y_y}{\X_y}
\Longrightarrow (x_{II},y_{II}) \text{ is stable (spiral or node)}} $.
\end{theo}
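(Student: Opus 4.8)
The plan is to reduce all four implications to the sign of the trace of the pertinent Jacobian at the singularity. Under~\eqref{OUOU_1}, Proposition~\ref{VFI_Cor} puts $(x_I,y_I)$ and $(x_{II},y_{II})$ in the first quadrant, and Table~\ref{table1} shows the determinants of $D\Phi_I(x_I,y_I)$ and $D\Phi_{II}(x_{II},y_{II})$ both have the sign of $\Delta_y>0$. Hence each of these singularities is a sink (stable spiral or node) precisely when its trace is negative and a source (unstable spiral or node) precisely when its trace is positive, so it suffices to locate the value of $\sigma$ at which the trace vanishes and to read off the sign on either side of it. Recall also from Corollary~\ref{tau(sigma)} that under~\eqref{OUOU_1} one has $\tau=y_I/x_I=y_{II}/x_{II}$, with $\sigma\mapsto\tau$ strictly increasing (Corollary~\ref{AlphaSigmaTau}), and that the explicit formula for $\tau$ shows $\tau$ runs from $0$ to $+\infty$ as $\sigma$ runs through $\bigl(\Y_x/\X_x,\,\Y_y/\X_y\bigr)$.

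For $(x_I,y_I)$ I would use that this point satisfies $\dot x=0$ for $\Phi_I$, i.e. $\rho x_I=\X_x x_I^2+\X_y y_I^2$, which after dividing by $x_I$ reads $\rho=x_I(\X_x+\X_y\tau^2)$. Since the trace of $D\Phi_I(x_I,y_I)$ equals $\rho-2(\X_x x_I+\Y_y y_I)=\rho-2x_I(\X_x+\Y_y\tau)$, comparing the two expressions shows the trace vanishes if and only if $\X_y\tau^2-2\Y_y\tau-\X_x=0$, whose unique positive root is $\tau^*=\bigl(\Y_y+\sqrt{\Y_y^2+\X_x\X_y}\bigr)/\X_y$. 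Corollary~\ref{tau(sigma)} inverts to $\sigma=(\tau^2\Y_y+\Y_x)/(\X_x+\tau^2\X_y)$; feeding in $\tau=\tau^*$ and clearing the factor $\X_y^2$ yields exactly the threshold separating I.i and I.ii. By the monotonicity and range of $\tau(\sigma)$ recalled above, this is the only $\sigma\in\bigl(\Y_x/\X_x,\,\Y_y/\X_y\bigr)$ at which the trace changes sign; and since, as noted in the discussion preceding the theorem, the trace tends to $-\rho<0$ as $\sigma\to(\Y_x/\X_x)^+$ (where $x_I\to\rho/\X_x$, $y_I\to0^+$), we conclude $(x_I,y_I)$ is stable below the threshold (I.i) and unstable above it (I.ii).

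For $(x_{II},y_{II})$ the argument has the same shape, now using that $(x_{II},y_{II})$ satisfies $\dot y=0$ for $\Phi_{II}$, namely $(1-\rho)ry_{II}=\Y_x x_{II}^2+\Y_y y_{II}^2$, which gives $(1-\rho)r=x_{II}(\Y_x+\Y_y\tau^2)/\tau$. Comparing with the trace $(1-\rho)r-2(\X_x x_{II}+\Y_y y_{II})=(1-\rho)r-2x_{II}(\X_x+\Y_y\tau)$, the trace vanishes if and only if $\Y_y\tau^2+2\X_x\tau-\Y_x=0$, whose unique positive root is $\tau^{**}=\bigl(\sqrt{\X_x^2+\Y_x\Y_y}-\X_x\bigr)/\Y_y$ (the factor $r$ cancels, consistently with the remark after Table~\ref{table1} that the signs do not depend on $r$). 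Substituting $\tau=\tau^{**}$ into $\sigma=(\tau^2\Y_y+\Y_x)/(\X_x+\tau^2\X_y)$ and clearing $\Y_y^2$ produces the threshold separating II.i and II.ii; monotonicity of $\tau(\sigma)$ again makes it the unique sign change of the trace, and since the trace has the sign of $1-\rho>0$ as $\sigma\to(\Y_x/\X_x)^+$ (where $x_{II},y_{II}\to0$), we get $(x_{II},y_{II})$ unstable below the threshold (II.i) and stable above it (II.ii).

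The only genuinely computational step — and the one I expect to be the main nuisance — is the final substitution in each case: verifying that once $(\tau^*)^2=\bigl(\sqrt{\X_x\X_y+\Y_y^2}+\Y_y\bigr)^2/\X_y^2$ is inserted into $(\tau^2\Y_y+\Y_x)/(\X_x+\tau^2\X_y)$ and $\X_y^2$ is cleared, the fraction collapses to the bound displayed in I.i--I.ii, and similarly for $\tau^{**}$ and the bound in II.i--II.ii. This is routine but fiddly algebra. Finally, note that each threshold then lies strictly between $\Y_x/\X_x$ and $\Y_y/\X_y$ for free: the trace has opposite signs near the two endpoints and a single interior zero, so that zero cannot sit at an endpoint, which is what makes the displayed double inequalities meaningful.
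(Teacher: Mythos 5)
Your proposal is correct and follows essentially the same route as the paper: using the positive determinant (sign of $\Delta_y$) from Table~\ref{table1} to reduce everything to the trace sign, showing that sign equals that of $\X_y\tau^2-2\Y_y\tau-\X_x$ (resp. $-\Y_y\tau^2-2\X_x\tau+\Y_x$), and converting the unique positive root back to the stated $\sigma$-thresholds via the monotone relation $\sigma=(\tau^2\Y_y+\Y_x)/(\tau^2\X_y+\X_x)$. Your shortcut of substituting the equilibrium identity $\rho=x_I(\X_x+\X_y\tau^2)$ (resp. $(1-\rho)r=x_{II}(\Y_x+\Y_y\tau^2)/\tau$) into the trace is a minor streamlining of the paper's computation, and the endpoint sign checks you add match the paper's discussion preceding the theorem.
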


\begin{proof}

From the expression of $\tau$ given in \eqref{tau} we get that \begin{equation}\label{sigma(tau)}\sigma=\frac{\tau^2\Y_y+\Y_x}{\tau^2\X_y+\X_x}.\end{equation} In particular, due to \eqref{OUOU_1}, $\sigma$ is a monotonically increasing function of $\tau^2$. In fact, $$\frac{d\sigma}{d(\tau^2)}=\frac{\Delta}{(\tau^2\X_y+\X_x)^2}>0.$$

To prove {\em I.} we use that
\begin{align*}
sign\Bigl( & \text{trace}(D\Phi_I(x_I,y_I))\Bigr) = sign\Bigl(\rho-2(\X_x x_I+\Y_y y_I)\Bigr) = sign\Bigl(\rho-2(\X_x+\Y_y\tau)x_I\Bigr) \\
& = sign\left(\rho-2(\X_x+\Y_y\tau)\frac{\rho\Y_y-(1-\rho)\X_y}{\X_x\Y_y-\X_y\Y_x}\right) = sign\left(1-2(\X_x+\Y_y\tau)\frac{\Y_y-\sigma\X_y}{\X_x\Y_y-\X_y\Y_x}\right) \\
& =_{(1)} sign\left(1-2(\X_x+\Y_y\tau)\frac{\Y_y-\frac{\tau^2\Y_y+\Y_x}{\tau^2\X_y+\X_x}\X_y}{\X_x\Y_y-\X_y\Y_x}\right)
= sign\Bigl(\tau^2 \X_y-2 \tau \Y_y-\X_x\Bigr),
\end{align*}
where $=_{(1)}$ is due \eqref{sigma(tau)}. Hence, $\text{trace}(D\Phi_I(x_I,y_I))<0$ whenever $0<\tau<\frac{\Y_y+\sqrt{\X_x \X_y+\Y_y^{2}}}{\X_y}$ and $\text{trace}(D\Phi_I(x_I,y_I))>0$ whenever $\tau>\frac{\Y_y+\sqrt{\X_x \X_y+\Y_y^{2}}}{\X_y}$, which by~\eqref{sigma(tau)} conclude the proof.

To prove {\em II.} we use a similar analysis, but considering $(x_{II},y_{II})=r\tau( x_I, y_I)=r\tau( x_I,\tau x_I)$ and~\eqref{sigma(tau)} to deduce that
$$sign\Bigl(\text{trace}(D\Phi_{II}(x_{II},y_{II}))\Bigr)=sign\Bigl(-\tau^2 \Y_y-2 \tau \X_x + \Y_x\Bigr).$$
\end{proof}

Note that theorem \ref{SingI_II} only states conditions on $\sigma$ under which the singularities are stable (or unstable), but it does not specify if the singularity is a node or a spiral. Sufficient conditions under which the singularities are stable nodes and the vector fields do not admit cycles on the
first quadrant will be given in theorem \ref{noCycles}.
In order to prove theorem \ref{noCycles}, we will study the behavior of the vector fields along a straight line $y=mx$ with positive $m$.
Recall that $ Q $ denotes the competition polynomial (see definition~\ref{comp.pol}).
\begin{lem}\label{ChangeOfOrientation}
Given $m>0$, except at the origin,

\medskip

\noindent I.{\color{white}I} $\Phi_I(\hat{x},m\hat{x})$ is collinear to $(1,m)$ only at the point(s) where $Q(m)\hat{x}=\rho m-(1-\rho)$;

\noindent II. $\Phi_{II}(\hat{x},m\hat{x})$ is collinear to $(1,m)$ only at the point(s) where $Q(m)\hat{x}=\rho rm^2-(1-\rho)rm$.
\end{lem}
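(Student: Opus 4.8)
The plan is to write out the condition for collinearity directly from the definitions of $\Phi_I$ and $\Phi_{II}$ and simplify. Recall that a nonzero vector $(u,v)$ is collinear to $(1,m)$ precisely when $mu - v = 0$. So for part~I, I would substitute $(x,y) = (\hat x, m\hat x)$ into the formula for $\Phi_I$, obtaining
\begin{equation*}
\Phi_I(\hat x, m\hat x) = \bigl(\rho\hat x - (\X_x + \X_y m^2)\hat x^2,\ (1-\rho)\hat x - (\Y_x + \Y_y m^2)\hat x^2\bigr),
\end{equation*}
and then impose $m\cdot\bigl(\rho\hat x - (\X_x+\X_y m^2)\hat x^2\bigr) - \bigl((1-\rho)\hat x - (\Y_x+\Y_y m^2)\hat x^2\bigr) = 0$. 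Dividing by $\hat x$ (legitimate away from the origin) collects the $\hat x$-terms: the coefficient of $\hat x$ is $m(\X_x+\X_y m^2) - (\Y_x + \Y_y m^2) = \X_y m^3 - \Y_y m^2 + \X_x m - \Y_x = Q(m)$, and the constant term is $m\rho - (1-\rho)$. Rearranging gives exactly $Q(m)\hat x = \rho m - (1-\rho)$, which is the claimed equation.

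For part~II the computation is identical in structure. Substituting $(x,y)=(\hat x, m\hat x)$ into $\Phi_{II}$ gives
\begin{equation*}
\Phi_{II}(\hat x, m\hat x) = \bigl(\rho r m\hat x - (\X_x+\X_y m^2)\hat x^2,\ (1-\rho) r m\hat x - (\Y_x+\Y_y m^2)\hat x^2\bigr),
\end{equation*}
since $F(\hat x, m\hat x) = \min\{\hat x, r m \hat x\}$ — but the collinearity condition is a property of the \emph{direction} of the vector field, and on the ray $y=mx$ the two components of $\Phi_{II}$ share the common factor $F(\hat x,m\hat x)$ only through the first term; rather, one simply uses the explicit formula for $\Phi_{II}$ as stated in the excerpt, whose first-coordinate birth term is $\rho r y$ and second-coordinate birth term is $(1-\rho) r y$. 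Imposing $m u - v = 0$ and dividing by $\hat x$, the quadratic-in-$\hat x$ coefficient is again $m(\X_x+\X_y m^2) - (\Y_x+\Y_y m^2) = Q(m)$, while the linear term becomes $m\cdot \rho r m - (1-\rho) r m = \rho r m^2 - (1-\rho) r m$. This yields $Q(m)\hat x = \rho r m^2 - (1-\rho) r m$, as asserted.

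There is essentially no obstacle here; the only point requiring a word of care is the phrase ``except at the origin,'' which is exactly what lets us divide through by $\hat x$, and the observation that the factorization of the $\hat x^2$-coefficient reproduces the competition polynomial $Q(m)$ of Definition~\ref{comp.pol} — a direct match of coefficients. I would present the argument for $\Phi_I$ in full and remark that the case of $\Phi_{II}$ is obtained by the same elimination, replacing the birth terms $(\rho x,(1-\rho)x)$ by $(\rho r y,(1-\rho) r y)$ and hence $\rho m-(1-\rho)$ by $\rho r m^2 - (1-\rho) r m$.
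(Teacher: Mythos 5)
Your proof is correct and follows essentially the same route as the paper: the paper imposes $\langle \Phi_\bullet(\hat x,m\hat x),\mathbf{n}\rangle=0$ for a normal vector $\mathbf{n}$ to $(1,m)$, which is exactly your condition $mu-v=0$, and the computation after dividing by $\hat x$ reproduces $Q(m)$ and the stated right-hand sides. The only blemish is the muddled aside about the mating function $F$ in part~II, which is unnecessary since $\Phi_{II}$ is defined directly by its explicit formula.
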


\begin{proof}
Let $\mathbf{n}\in\R^2$ be a non-null vector orthogonal to $(1,m)$. The results {\em I.} and {\em II.} follow by computing the value of $\hat{x}$ for which $<\Phi_I(\hat{x},m\hat{x}),\mathbf{n}>=0$ and  $<\Phi_{II}(\hat{x},m\hat{x}),\mathbf{n}>=0$, respectively.
\end{proof}

\begin{rem}\label{ObsChangeOfOrientation}
Let $\alpha$ be the real root of $Q$. If $ m\neq \alpha$, then $\Phi_I$ is collinear to $(1,m)$ at $(x_{I}^m,\ y_{I}^m):=(x_{I}^m,\ m x_{I}^m)$,
where $x_{I}^m$ solves the linear equation in lemma \ref{ChangeOfOrientation}.I., while $\Phi_{II}$ is collinear to $(1,m)$ at
$(x_{II}^m,\ y_{II}^m):=(x_{II}^m,\ mx_{II}^m)$, where $x_{II}^m$ solves the linear equation in lemma \ref{ChangeOfOrientation}.II.
In particular, if $\alpha\neq m=\tau$, note that $(x_{I}^\tau,\ y_{I}^\tau)=(x_I,\ y_I)$ and $(x_{II}^\tau,\ y_{II}^\tau)=(x_{II},\ y_{II})$.
On the other hand, if $m=\alpha$, there are two cases: either $\sigma=\alpha=\tau$, which implies that all the points of the straight line $ y = \sigma x $
solve the equations in lemma~\ref{ChangeOfOrientation}, and therefore the intersection of this straight line with the first quadrant is a stable
manifold of both singularities $ (x_I,\ y_I) $ and $ (x_{II},\ y_{II})$; or $\sigma\neq\alpha$, which implies that the origin is the unique point
where $\Phi_{I}$ and $\Phi_{II}$ are collinear
to $(1,m)$.
\end{rem}

\begin{theo}\label{noCycles} Let $\alpha$ be the real root of the competition polynomial $Q$. Then, under condition \eqref{OUOU_1},

\medskip

\noindent  I.{\color{white}{I}} $(x_I,y_I)$ is a stable node and $\Phi_I$ does not admit cycles on the first quadrant if $$\sigma\leq\max\left\{\alpha,\frac{2\Y_x\Y_y}{\X_x\Y_y+\X_y\Y_x}\right\}.$$

\noindent  II. $(x_{II},y_{II})$ is a stable node and $\Phi_{II}$ does not admit cycles on the first quadrant if $$\sigma\geq\min\left\{\alpha,\frac{\X_x\Y_y+\X_y\Y_x}{2\X_x\X_y}\right\}.$$
\end{theo}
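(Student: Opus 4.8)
The plan is to prove the two assertions contained in~I --- that $(x_I,y_I)$ is a stable node, and that $\Phi_I$ has no cycle in the first quadrant --- separately, reducing everything to statements about the single scalar $\tau$ by means of the relation $\sigma=\frac{\tau^2\Y_y+\Y_x}{\tau^2\X_y+\X_x}$ from~\eqref{sigma(tau)} and its monotonicity in $\tau^2$. For the local assertion, I would compute, exactly as in the proof of theorem~\ref{SingI_II} (writing $(x_I,y_I)=(x_I,\tau x_I)$ and $x_I=\frac{\rho}{\tau^2\X_y+\X_x}$), that
$\mathrm{sign}\bigl(\mathrm{trace}\,D\Phi_I(x_I,y_I)\bigr)=\mathrm{sign}(\X_y\tau^2-2\Y_y\tau-\X_x)$
and, after the analogous manipulation, $\mathrm{sign}\bigl(\mathrm{disc}\,D\Phi_I(x_I,y_I)\bigr)=\mathrm{sign}\bigl((\X_x+2\Y_y\tau-\X_y\tau^2)^2-8\Delta\tau\bigr)$. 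Now $\sigma\le\alpha$ is equivalent to $\tau\le\alpha$ (corollary~\ref{AlphaSigmaTau}), while a direct computation shows that $\sigma\le\frac{2\Y_x\Y_y}{\X_x\Y_y+\X_y\Y_x}$ is equivalent to $\tau^2\le\Y_x/\Y_y$. In either of these $\tau$-ranges one checks that $\X_x+2\Y_y\tau-\X_y\tau^2>0$ (so the trace is negative) and that $(\X_x+2\Y_y\tau-\X_y\tau^2)^2>8\Delta\tau$ (so the discriminant is positive) --- at $\tau^2=\Y_x/\Y_y$ the latter inequality collapses to $\bigl(\tfrac{\Delta}{\Y_y}-2\Y_y\tau\bigr)^2>0$, and the remaining monotonicity is a routine check --- whence $(x_I,y_I)$ is a stable node.

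For the absence of cycles, the starting point is that a periodic orbit $\gamma$ contained in $(\R_+^*)^2$ cannot wind around the origin (the first quadrant is an angular wedge of opening $\pi/2$), so $\gamma$ must enclose $(x_I,y_I)$, the only singularity of $\Phi_I$ there under~\eqref{OUOU_1}, and in particular $\psi_{\min}<\tau<\psi_{\max}$ for $\psi:=y/x$. Along any solution one has $\dot\psi=\rho(\sigma-\psi)+Q(\psi)x$, which is a restatement of lemma~\ref{ChangeOfOrientation}.I together with proposition~\ref{VetcorFieldAtOrigin}. At a point of $\gamma$ realizing $\psi_{\max}$ (resp.\ $\psi_{\min}$) we have $\dot\psi=0$, so the $x$-coordinate there equals $\rho(\psi_{\max}-\sigma)/Q(\psi_{\max})$; since this must be positive, and since on every ray $y=mx$ with $m$ in the closed interval between $\sigma$ and $\alpha$ one has $\dot\psi<0$ throughout, one concludes $\psi_{\max},\psi_{\min}$ avoid that interval. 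Combining this with an intermediate-value argument --- a trajectory cannot let $\psi$ increase across the band of slopes strictly between $\sigma$ and $\alpha$, where $\dot\psi<0$ --- I would deduce that $\gamma\subset\{0<y/x<\sigma\}$ when $\sigma\le\alpha$ and $\gamma\subset\{y/x>\sigma\}$ when $\alpha<\sigma$. (When $\sigma=\alpha$ the ray $y=\alpha x$ is invariant and passes through $(x_I,y_I)$, so it cannot be crossed transversally by a cycle surrounding $(x_I,y_I)$, and there is simply no cycle.)

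It remains to eliminate the localized cycle with a Dulac function. If $\sigma\le\alpha$ I would take $B=1/x$ on the simply connected region $\{x>0,\ 0<y<\tau^{*}x\}$, where $\tau^{*}:=\frac{\Y_y+\sqrt{\X_x\X_y+\Y_y^{2}}}{\X_y}$ is the threshold on $\tau$ for $\mathrm{trace}\,D\Phi_I<0$ found in the proof of theorem~\ref{SingI_II}: one computes $\mathrm{div}(B\Phi_I)=\bigl(\X_y y^{2}-2\Y_y xy-\X_x x^{2}\bigr)/x^{2}$, which is strictly negative exactly when $y/x<\tau^{*}$, and since $\sigma<\Y_y/\X_y<\tau^{*}$ this region contains $\gamma$, contradicting Bendixson--Dulac. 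The delicate case, and the step I expect to be the main obstacle, is $\alpha<\sigma\le\frac{2\Y_x\Y_y}{\X_x\Y_y+\X_y\Y_x}$ (equivalently $\alpha<\sigma$ and $\tau^2\le\Y_x/\Y_y$), where one only knows $\gamma\subset\{y/x>\sigma\}$: the slope-homogeneous monomial multipliers do not produce a fixed-sign divergence on this whole wedge, so one must use the hypothesis $\sigma\le\frac{2\Y_x\Y_y}{\X_x\Y_y+\X_y\Y_x}$ more sharply --- either to derive an a priori upper bound $\psi_{\max}<\tau^{*}$ (from the second-order condition $\ddot\psi\le0$ at the maximum of $\psi$, which forces $\dot x\le0$ there, together with the fact that $\gamma$ must cross the nullcline ellipse~\eqref{ellipse1}) so that $B=1/x$ still applies, or to construct an ad hoc Dulac multiplier tailored to the wedge $\{y/x>\sigma\}$.

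Finally, part~II is parallel. By theorem~\ref{LocalCharact_1} the type of $(x_{II},y_{II})$ is independent of $r$, and $(x_{II},y_{II})=r\tau(x_I,y_I)$ by theorem~\ref{Rel_bet_sing}; one computes, as in theorem~\ref{SingI_II}.II, that $\mathrm{sign}\bigl(\mathrm{trace}\,D\Phi_{II}(x_{II},y_{II})\bigr)=\mathrm{sign}(\Y_x-2\X_x\tau-\Y_y\tau^2)$ with an analogous discriminant, while $\sigma\ge\frac{\X_x\Y_y+\X_y\Y_x}{2\X_x\X_y}$ is equivalent to $\tau^2\ge\X_x/\X_y$; the no-cycles part proceeds via the identity $\dot\psi=r\psi\,\rho(\sigma-\psi)+Q(\psi)x$ along the flow of $\Phi_{II}$ (lemma~\ref{ChangeOfOrientation}.II) to confine a hypothetical cycle to a wedge about $y=\sigma x$, followed by the same kind of Dulac argument with the roles of the two coordinates exchanged. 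I would present this part only in outline, being ``completely analogous''.
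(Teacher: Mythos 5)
Your route is genuinely different from the paper's (which never touches the discriminant and instead builds positively invariant regions out of arcs of the nullcline ellipses \eqref{ellipse1}--\eqref{ellipse2} and the ray $y=\tau x$), and the branch $\sigma\leq\alpha$ you do treat is essentially sound: the identity $\dot\psi=\rho(\sigma-\psi)+Q(\psi)x$ for $\psi=y/x$ is correct, it confines a hypothetical cycle (which must enclose $(x_I,y_I)$) to the wedge $\{0<y/x<\sigma\}$, and there $B=1/x$ is a legitimate Dulac multiplier since $\mathrm{div}(\Phi_I/x)=(\X_y y^2-2\Y_y xy-\X_x x^2)/x^2<0$ for $y/x<\tau^*:=\frac{\Y_y+\sqrt{\X_x\X_y+\Y_y^2}}{\X_y}$ and $\sigma<\Y_y/\X_y<\tau^*$. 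The genuine gap is the second branch of the hypothesis, $\alpha<\sigma\leq\frac{2\Y_x\Y_y}{\X_x\Y_y+\X_y\Y_x}$ (equivalently $\tau\geq\sigma$ and $\Y_y\tau^2\leq\Y_x$), which you yourself flag as ``the main obstacle'' and leave as two unexecuted options: here your confinement only yields $\gamma\subset\{y/x>\sigma\}$, an unbounded wedge on which $\mathrm{div}(\Phi_I/x)$ changes sign at $y/x=\tau^*$, so no cycle-exclusion is actually proved, and neither the a priori bound $\psi_{\max}<\tau^*$ nor an ad hoc multiplier is supplied. This is exactly where the paper's argument does its work: it rewrites $\sigma\leq\frac{2\Y_x\Y_y}{\X_x\Y_y+\X_y\Y_x}$ as $x_I\geq\frac{1-\rho}{2\Y_x}$, i.e. $(x_I,y_I)$ lies on or to the right of the vertical axis of the $\dot y$-nullcline \eqref{ellipse2}, and then checks the direction of $\Phi_I$ along arcs of \eqref{ellipse1} and \eqref{ellipse2} to exhibit a positively invariant region; this at one stroke forbids cycles and, since $(x_I,y_I)$ is not a saddle by theorem \ref{LocalCharact_1} and cannot be spiralled around from inside an invariant region whose boundary reaches it, shows it is a stable node. (In the other case $\sigma<\alpha$, i.e. $\tau<\sigma$, the paper's invariant region is bounded by the segment of $y=\tau x$ from the origin to $(x_I,y_I)$, into which the field points by lemma \ref{ChangeOfOrientation} and proposition \ref{VetcorFieldAtOrigin}, plus an arc of \eqref{ellipse2}.) Until you produce an actual argument on the wedge $\{y/x>\sigma\}$, part I is only half proved, and the same hole propagates to your ``analogous'' part II.

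A secondary, fixable weakness: your local step asserts positivity of the discriminant by an endpoint evaluation plus ``routine monotonicity'', but $h(\tau):=(\X_x+2\Y_y\tau-\X_y\tau^2)^2-8\Delta\tau$ need not be monotone (it can decrease near $\tau=0$), so this as stated proves nothing on the interior of the range. The claim is nevertheless true if you use the constraints pointwise: $\Y_y\tau^2\leq\Y_x$ gives $h(\tau)\geq(\X_x-\X_y\tau^2-2\Y_y\tau)^2\geq0$, while $\tau\leq\alpha$, i.e. $Q(\tau)\leq0$, gives $h(\tau)\geq(\X_x+3\X_y\tau^2-2\Y_y\tau)^2\geq0$; together with your (correct) trace computation from theorem \ref{SingI_II} this would close the local statement analytically, something the paper's geometric proof simply bypasses.
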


\begin{proof} We will prove only {\em I.}, since {\em II.} has an analogous proof.
So let $ H_\tau^+ $ denote the half-plane $\{(x,y)\in\R^2:\ y \ge \tau x\} $.
From lemma \ref{ChangeOfOrientation} and remark \ref{ObsChangeOfOrientation}, when $\sigma=\alpha=\tau$, the intersection of the straight line $y=\tau x$ with the first quadrant is a stable manifold of both singularities, and therefore neither $\Phi_I$ nor $\Phi_{II}$ admit cycles on that quadrant.

If $\sigma<\alpha$, from corollary \ref{AlphaSigmaTau}, we obtain $\sigma>\tau$, and, from lemma \ref{ChangeOfOrientation} and remark \ref{ObsChangeOfOrientation},
we have $(x_{I}^\tau,\ y_{I}^\tau)=(x_{I},\ y_{I})$. On the other hand, proposition \ref{VetcorFieldAtOrigin} guarantees that, on the straight line $y=\tau x$ and near the origin, the vector field $\Phi_I$ has a slope near  $\sigma$, that is, it is pointing to the interior of $H_\tau^+$. Thus, $\Phi_I$ is pointing to the interior of $H_\tau^+$ along all the line segment from the origin to the point $(x_I,y_I)$.

Now, observe that, on the segment of the ellipse \eqref{ellipse2} from the origin to the point $(x_I,y_I)$, the vector field $\Phi_I$ is pointing to the straight line $y = \tau x$ (see figure~\ref{noCyclesFIG1}). Thus, in the bounded region of $ H_\tau^+ $ enclosed by the ellipse \eqref{ellipse2},
the flow will converge to $(x_{I},\ y_{I})$. Since $(x_{I},\ y_{I})$ cannot be a saddle point (theorem \ref{LocalCharact_1}), then it is a stable node. Furthermore, this prevents the existence of a cycle on the first quadrant.

\begin{figure}[!htb]
\begin{center}
\includegraphics[scale=.35]{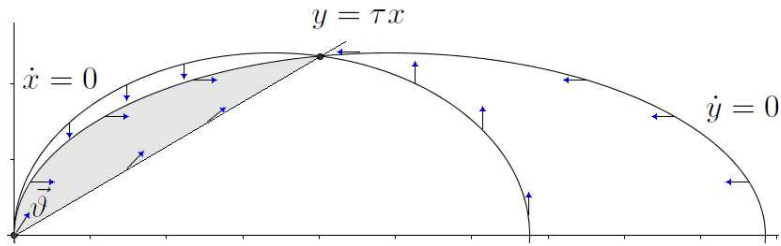}
\caption{\small{Behavior of $\Phi_I$ inside the bounded region of $ H_\tau^+ $ enclosed by the ellipse \eqref{ellipse2}. The arrow labeled as $ \vec{\vartheta}$ in the picture is not a vector of the vector field $\Phi_I$ but represents the direction of $\Phi_I$ near the origin
(see proposition~\ref{VetcorFieldAtOrigin}).}}
\label{noCyclesFIG1}
\end{center}
\end{figure}

For the case when $\sigma\leq\frac{2\Y_x\Y_y}{\X_x\Y_y+\X_y\Y_x}$, we notice that this condition is equivalent to $x_I\geq \frac{1-\rho}{2\Y_x}$, that is, the point $(x_{I},\ y_{I})$ is on or to the right of the vertical axis of the ellipse \eqref{ellipse2}. Thus, analyzing the vector field on the ellipses \eqref{ellipse1} and \eqref{ellipse2}, we see that the shaded region in figure~\ref{noCyclesFIG2} is invariant by the flow. Thus, again we conclude that
$(x_{I},\ y_{I})$ is a stable node and that, on the first quadrant, cycles are not allowed.

\begin{figure}[!htb]
\begin{center}
\includegraphics[scale=.30]{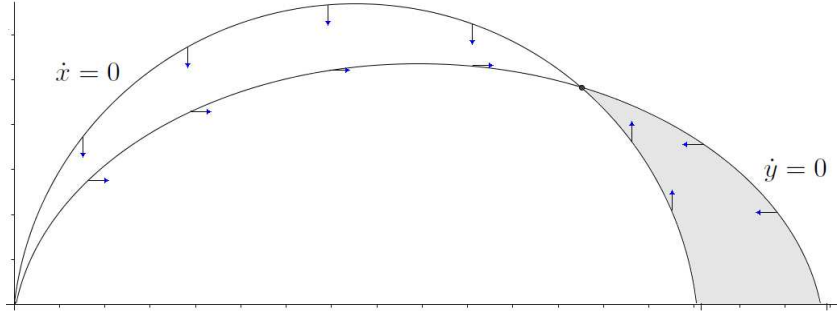}
\caption{\small{Behavior of $\Phi_I$ when $(x_{I},\ y_{I})$ is on or to the right of the vertical axis of the ellipse \eqref{ellipse2}.}}
\label{noCyclesFIG2}
\end{center}
\end{figure}
\end{proof}

\begin{cor}\label{Cor.noCycles}
Under condition \eqref{OUOU_1}, at least one of the singularities $(x_{I},\ y_{I})$ and $(x_{II},\ y_{II})$ is a stable node and its respective vector field does
not admit cycle on the first quadrant.
\end{cor}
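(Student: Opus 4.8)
The plan is to observe that the two sufficient conditions supplied by Theorem~\ref{noCycles} already exhaust all the values of $\sigma$ permitted by~\eqref{OUOU_1}, so no extra work is needed beyond invoking that theorem together with the trichotomy of the real line.

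First I would recall the content of Theorem~\ref{noCycles} under condition~\eqref{OUOU_1}. Part~I asserts that $(x_I,y_I)$ is a stable node and $\Phi_I$ has no cycle in the first quadrant whenever $\sigma\le\max\bigl\{\alpha,\ \tfrac{2\Y_x\Y_y}{\X_x\Y_y+\X_y\Y_x}\bigr\}$; since the maximum on the right is in particular $\ge\alpha$, this conclusion holds a fortiori whenever $\sigma\le\alpha$. Symmetrically, Part~II asserts that $(x_{II},y_{II})$ is a stable node and $\Phi_{II}$ has no cycle in the first quadrant whenever $\sigma\ge\min\bigl\{\alpha,\ \tfrac{\X_x\Y_y+\X_y\Y_x}{2\X_x\X_y}\bigr\}$; since the minimum on the right is $\le\alpha$, this holds whenever $\sigma\ge\alpha$.

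The second (and last) step is the elementary remark that for the fixed real number $\alpha$ (the unique real root of the competition polynomial $Q$, which exists since $\Delta\neq 0$ under~\eqref{OUOU_1}) and the real number $\sigma$, exactly one of $\sigma<\alpha$, $\sigma=\alpha$, $\sigma>\alpha$ occurs; in all three cases we have $\sigma\le\alpha$ or $\sigma\ge\alpha$. Hence at least one of the hypotheses of Theorem~\ref{noCycles}.I or~\ref{noCycles}.II is satisfied, which yields the corollary. One may also note, as a by-product, that when $\sigma=\alpha$ both conclusions hold simultaneously, in agreement with the description in Remark~\ref{ObsChangeOfOrientation} of the line $y=\sigma x$ as a common stable manifold. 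I do not expect any genuine obstacle here: once Theorem~\ref{noCycles} is in place, the corollary is a one-line case split, the only point worth spelling out being why $\sigma\le\alpha$ (resp.\ $\sigma\ge\alpha$) already forces the respective inequality involving the $\max$ (resp.\ $\min$).
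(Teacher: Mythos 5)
Your proposal is correct, and it reaches the conclusion by a cleaner route than the paper. Both arguments boil down to showing that the hypotheses of Theorem~\ref{noCycles}.I and \ref{noCycles}.II together cover every admissible value of $\sigma$, but the paper proves this coverage by first establishing the auxiliary inequality $\frac{2\Y_x\Y_y}{\X_x\Y_y+\X_y\Y_x}<\frac{\X_x\Y_y+\X_y\Y_x}{2\X_x\X_y}$ (which under \eqref{OUOU_1} amounts to $\Delta\neq 0$) and then running a three-scenario case analysis on where $\alpha$ sits relative to these two quantities, concluding that the negation of hypothesis~I forces hypothesis~II. You instead observe the trivial bounds $\max\bigl\{\alpha,\tfrac{2\Y_x\Y_y}{\X_x\Y_y+\X_y\Y_x}\bigr\}\ge\alpha$ and $\min\bigl\{\alpha,\tfrac{\X_x\Y_y+\X_y\Y_x}{2\X_x\X_y}\bigr\}\le\alpha$, so that $\sigma\le\alpha$ already triggers part~I and $\sigma\ge\alpha$ already triggers part~II, and trichotomy finishes the proof. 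This bypasses the auxiliary inequality and the case split entirely; the only thing the paper's version buys in exchange is the explicit (and occasionally useful) quantitative comparison of the two threshold fractions, which your argument never needs. Your side remark that $\sigma=\alpha$ makes both conclusions hold simultaneously is also consistent with Remark~\ref{ObsChangeOfOrientation}.
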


\begin{proof}
We only need to show that, if condition $I.$ in theorem \ref{noCycles} does not hold, then condition $II.$ holds.
In order to do that, first note that, under condition \eqref{OUOU_1}, we have $\frac{2\Y_x\Y_y}{\X_x\Y_y+\X_y\Y_x}<\frac{\X_x\Y_y+\X_y\Y_x}{2\X_x\X_y}$.
There are then three possible scenarios:
$$ \alpha\leq\frac{2\Y_x\Y_y}{\X_x\Y_y+\X_y\Y_x}, \qquad
\frac{2\Y_x\Y_y}{\X_x\Y_y+\X_y\Y_x}< \alpha<\frac{\X_x\Y_y+\X_y\Y_x}{2\X_x\X_y}, \qquad
\frac{\X_x\Y_y+\X_y\Y_x}{2\X_x\X_y}\leq \alpha. $$
Thus, if $\sigma>\max\left\{\alpha,\frac{2\Y_x\Y_y}{\X_x\Y_y+\X_y\Y_x}\right\}$, in any case we have that $\sigma>\min\left\{\alpha,\frac{\X_x\Y_y+\X_y\Y_x}{2\X_x\X_y}\right\}$.
\end{proof}

\subsection{Behavior of the flow associated to $\Phi$}\label{sing._of_Phi}

In order to describe the behavior of the flow associated to the original vector field $\Phi$,
we need to understand how the flows associated to the vector fields $\Phi_I$ and $\Phi_{II}$ are ``glued'' along the ray $x=ry$, $y \ge 0$.

First, note that the position of the singular point of $\Phi_I$ does not depend on the parameter $r$. On the other hand, if all parameters remain constant but $r$, as $r$ increases, the coordinates of the singular point of $\Phi_{II}$ also increase. Therefore, from \eqref{Phi_I_II}, this means that, varying $r$, the ray $y=r^{-1}x$, $x \ge 0$, which is the frontier between $R_I$ and $R_{II}$, changes its position, while $(x_{II},y_{II})$ moves along the straight line $y=\tau x$. In particular, note that: if $r^{-1}<\tau$, then the ray $y=\tau x$, $x \ge 0$, is inside of $R_I$ and the non-null singularity of $\Phi$ is $(x_I,y_I)$;
if $r^{-1}=\tau$, then the ray $y=\tau x$, $x \ge 0$, coincides with the frontier between $R_I$ and $R_{II}$, and $(x_I,y_I)=(x_{II},y_{II})$
is a singularity of $\Phi$; and if $r^{-1}>\tau$, then the ray $y=\tau x$, $x \ge 0$, is inside of $R_{II}$ and the non-trivial singularity of $\Phi$ is $(x_{II},y_{II})$.

It is interesting to observe that the average number of the male's reproductive partners $r$ plays a key role in the selection between $(x_I,y_I)$ and $(x_{II},y_{II})$ as singularity of the vector field $\Phi$. Due to corollary~\ref{Cor.noCycles}, under condition \eqref{OUOU_1}, it is always possible to use $r$ to select a stable node as the singularity of $\Phi$.
In fact, if both $(x_I,y_I)$ and $(x_{II},y_{II})$ are stable for their respective vector fields, then the singularity of $\Phi$ will be stable regardless of the value of $r$. However, if $r<\tau^{-1}$, then the non-null singularity of $\Phi$ is $(x_{II},y_{II})$, which satisfies $x_{II}<x_I$ and $y_{II}<y_I$. Thus, the maximum size of an equilibrium population is achieved when $r\geq\tau^{-1}$.

Under condition~\eqref{OUOU_1}, when the secondary sex ratio $\sigma$ is near $\Y_x/\X_x$,
 $(x_I,y_I)$ is a stable node while $(x_{II},y_{II})$ is an unstable singularity. In such a situation, the singularity of $\Phi$ will be stable if, and only if, $r> \tau^{-1}$, that is, when the average number of female sexual partners of each male is larger than the female:male ratio, which means that, on average, all the females are reproducing. On the other hand, if the secondary sex ratio $\sigma$ is near $\Y_y/\X_y$,
then $(x_I,y_I)$ is an unstable singularity, while $(x_{II},y_{II})$ is a stable node. In this case, the singularity of $\Phi$ will be stable if, and only if, $r< \tau^{-1}$, which means that, on average, all the males are reproducing but not all the females are reproducing.

We notice that the above analysis enlighten an interesting feature of the population's equilibrium. Suppose, for an easier comprehension, that $\Y_x/\X_x \ll 1 \ll \Y_y/\X_y$. Therefore, if there are much less males than females being born, then the conservation of the two-sex species depends, in a fundamental way, on the fact that all the females are reproducing successfully. On the other hand, if there are much less females than males being born, then the population will only remain stable and achieve its equilibrium point when a number of females are not reproducing. This apparently contradictory interpretation indicates that the average number of male's reproductive partners $r$ may artificially increase the effect of the competition (with respect to its impact on the population growth) of the female population when this gender has relatively few individuals, allowing the population to reach a stable equilibrium.

The next result presents sufficient conditions for nonexistence of cycles for the flow associated to $\Phi$. In fact, theorem~\ref{noCycles} provides us
conditions for which the vector fields $\Phi_I$ and $\Phi_{II}$ do not admit cycles, but $\Phi$ may have a cycle composed by parts
of orbits which are not cycles for those vector fields (see figure~\ref{cyclesPhi}(b)).

\begin{theo}
The vector field $\Phi$ does not admit cycles if one of the following conditions holds:

\medskip

\noindent I.{\color{white}I}\quad $\Phi_I$ does not admit cycles on the first quadrant and $r^{-1}\leq\min\{\tau,\sigma\}$;

\noindent II.\quad $\Phi_{II}$ does not admit cycles on the first quadrant and $r^{-1}\geq\max\{\tau,\sigma\}$.
\end{theo}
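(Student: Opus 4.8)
The plan is to prove statement I; statement II follows from the symmetric argument, interchanging $\Phi_I\leftrightarrow\Phi_{II}$, $R_I\leftrightarrow R_{II}$ and reversing the inequalities across the separating ray. Throughout, \eqref{OUOU_1} is assumed and all competition parameters are positive. I first record the reductions. On the positive axes $F$ vanishes and, for $\Phi$, the outer--gender terms carry the factors $\mathbf{1}_{\R_+^*}$, so the axes are invariant with the flow directed toward the origin; hence every cycle of $\Phi$ lies in the open first quadrant $(\R_+^*)^2$, where $\Phi$ is locally Lipschitz, so that such a cycle has Poincar\'e index $+1$ and therefore encloses a singularity of $\Phi$ in $(\R_+^*)^2$. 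Since $r^{-1}\le\tau$, the ray $y=\tau x$ lies in $R_I$, so $(x_I,y_I)$ (which lies on that ray, by theorem~\ref{Rel_bet_sing}) is a singularity of $\Phi$, while $(x_{II},y_{II})$ does not lie in the convex region $R_{II}$; by proposition~\ref{VFI_Cor} these are the only candidates (the non--generic case $r^{-1}=\tau$ aside). Thus any cycle $\gamma$ of $\Phi$ encloses $(x_I,y_I)$. If $\gamma\subseteq R_I$ then, as $\Phi=\Phi_I$ on $R_I$, $\gamma$ is a cycle of $\Phi_I$ in the first quadrant, contradicting the hypothesis; and $\gamma\subseteq R_{II}$ would make $\gamma$ a cycle of $\Phi_{II}$ forced to enclose $(x_{II},y_{II})\notin R_{II}$, impossible by convexity of $R_{II}$. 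Therefore $\gamma$ must cross the ray $L:=\{(x,r^{-1}x):x>0\}$.

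The main computation is the direction of $\Phi$ along $L$. On $L$ one has $F(x,r^{-1}x)=x$, so $\Phi_I=\Phi_{II}=\Phi$ there, and with the inward normal $(-1,r)$ of $R_I$,
$$\langle\Phi(x,r^{-1}x),(-1,r)\rangle=x\bigl(r-\rho(1+r)\bigr)+x^2\,r\,Q(r^{-1}),$$
the linear coefficient being nonnegative exactly because $r^{-1}\le\sigma$. Two sub-cases then close immediately. If $r^{-1}\ge\alpha$ (so $Q(r^{-1})\ge 0$), the right-hand side is nonnegative for all $x>0$, hence $R_I$ is positively invariant and a crossing cycle cannot leave it — contradiction. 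If $\tau\ge\sigma$ (equivalently, by corollary~\ref{AlphaSigmaTau}, $\sigma\ge\alpha$), then $Q(\sigma)\ge 0$; since $y=\sigma x$ lies in $R_I$, where $\Phi=\Phi_I$ and $\frac{d}{dt}(y/x)=xQ(\sigma)\ge 0$ along $y=\sigma x$, the wedge $W=\{x>0,\ y\ge\sigma x\}$ is positively invariant for $\Phi$; it contains $(x_I,y_I)$ (because $\tau\ge\sigma$), and $\Phi=\Phi_I$ has no cycles on $W$, so a cycle enclosing $(x_I,y_I)$ would have to avoid $W$ entirely and thus lie in the convex set $\{x>0,\ 0<y<\sigma x\}$, which cannot enclose $(x_I,y_I)$ — contradiction.

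By corollary~\ref{AlphaSigmaTau} the only remaining generic possibility is then $r^{-1}\le\tau<\sigma<\alpha$. Here $Q(\tau),Q(r^{-1})<0$, the equilibrium $(x_I,y_I)$ is a stable node or focus (theorem~\ref{SingI_II}.I, using $\tau<\alpha<(\Y_y+\sqrt{\X_x\X_y+\Y_y^2})/\X_y$), and along $L$ the flow enters $R_I$ for $0<x<x^*$ and leaves $R_I$ for $x>x^*$, where $x^*=\rho(\sigma-r^{-1})/(-Q(r^{-1}))>0$. The plan here is to show that every orbit of $\Phi$ entering $R_I$ through $L$ — necessarily at a point with $x$-coordinate $\le x^*$ — is absorbed by $(x_I,y_I)$ and never returns to $L$; granting this, no cycle can cross $L$, which together with the first paragraph finishes the proof. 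To carry it out I would combine three ingredients, all living in $R_I$ where $\Phi=\Phi_I$: (i) the positively invariant region $\Omega$ from the proof of theorem~\ref{noCycles}.I, bounded by the segment $S$ of $y=\tau x$ from the origin to $(x_I,y_I)$ and an arc of the ellipse~\eqref{ellipse2}, inside which every orbit converges to $(x_I,y_I)$; (ii) the fact that $S$ lies strictly inside the ellipse~\eqref{ellipse2}, because $\dot y=x\,[(1-\rho)-(\Y_x+\Y_y\tau^2)x]>0$ for $0<x<x_I$; and (iii) the identity $\frac{d}{dt}(y/x)=\rho(\sigma-m)+xQ(m)$ along $y=mx$, which, for $r^{-1}\le m\le\tau$, is positive on the side — adjacent to the origin — of the tangency curve $\mathcal C=\{\rho(\sigma-y/x)+xQ(y/x)=0\}$ joining $(x^*,r^{-1}x^*)$ on $L$ to $(x_I,y_I)$ on $y=\tau x$. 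An orbit entering $R_I$ at a point of $L$ with $x<x^*$ then has $y/x$ strictly increasing, is pushed across $S$, and so enters $\Omega$; the non--generic boundary configurations $r^{-1}=\tau$, $r^{-1}=\sigma$, $\sigma=\alpha$ require minor separate attention, each reducing to a positively invariant half-plane side.

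The genuinely delicate point — and the one I expect to be the main obstacle — is this absorption argument: controlling orbits as they pass through the nonsmooth interface $L$ and ruling out cycles that ``bounce'' between $R_I$ and $R_{II}$. The two sub-cases of the second paragraph are immediate, but the residual case needs the interplay of the trapping region of theorem~\ref{noCycles}, the precise position of the ellipse~\eqref{ellipse2}, and the monotonicity of the angular coordinate $y/x$ in the wedge between $L$ and $y=\tau x$. A secondary, routine point is the justification of the index / Poincar\'e--Bendixson step for the merely Lipschitz vector field $\Phi$.
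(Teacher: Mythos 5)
Your reduction (a cycle must surround $(x_I,y_I)$ and must cross the ray $L:\ y=r^{-1}x$), your computation $\langle\Phi(x,r^{-1}x),(-1,r)\rangle=x\bigl(r-\rho(1+r)\bigr)+x^2\,r\,Q(r^{-1})$ (which is the quantitative form of lemma~\ref{ChangeOfOrientation}), and your two sub-cases $r^{-1}\ge\alpha$ and $\sigma\le\tau$ are correct. But the theorem is not proved: in the remaining regime $r^{-1}\le\tau<\sigma<\alpha$ --- which is nonvacuous under \eqref{OUOU_1} and is precisely the situation where the field genuinely changes orientation along $L$ at the interior point $A=(x^*,r^{-1}x^*)$ --- you give only a plan. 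The key assertion, that every orbit entering $R_I$ through $L$ at $x<x^*$ is ``pushed across $S$'' into the trapping region of theorem~\ref{noCycles} and never returns to $L$, does not follow from your ingredients (i)--(iii): the monotonicity $\frac{d}{dt}(y/x)>0$ holds only on the origin side of the curve $\mathcal{C}$, and nothing you establish prevents an orbit from crossing $\mathcal{C}$ inside the wedge $\{r^{-1}x\le y\le\tau x\}$ before it reaches $S$, after which $y/x$ decreases and the orbit may leave again through $L$ at some $x>x^*$. That is exactly the ``bounce'' scenario you admit you cannot yet exclude, and excluding it is the whole content of statement I in this regime; so the proposal has a genuine gap, which you yourself flag as the main obstacle.

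For comparison, the paper closes this case without any absorption statement. When $r^{-1}<\sigma$, it notes that $\Phi$ points into $R_I$ along $L$ near the origin (proposition~\ref{VetcorFieldAtOrigin}) and also at the two points $B$ and $C$ where $L$ meets the null-cline ellipses \eqref{ellipse1} and \eqref{ellipse2}; since by lemma~\ref{ChangeOfOrientation} the orientation of $\Phi$ relative to $L$ can change only at the single point $A$, the field must point into $R_I$ on the whole segment of $L$ from the origin to $C$. Hence re-entries into $R_I$ can occur only on that initial segment while exits occur only beyond it, and this configuration (figure~\ref{cyclesPhi}(a)) is incompatible with a closed orbit around $(x_I,y_I)$. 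To salvage your route you would either have to prove the absorption claim (for instance by locating $\mathcal{C}$ relative to the null-clines and showing orbits cannot cross it before reaching $S$), or replace the residual case by this single-orientation-change argument along $L$; your sub-cases $r^{-1}\ge\alpha$ and $\sigma\le\tau$ would then be redundant, since the paper's argument covers all of $r^{-1}\le\min\{\tau,\sigma\}$ at once.
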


\begin{proof}
Supposing $r^{-1}\leq\min\{\tau,\sigma\}$, since $r^{-1}\leq \tau$, the non-null singularity of $\Phi$ is $(x_I,y_I)$.
If $\Phi_I$ does not admit cycles, then there are no cycles for flow associated to $\Phi$ within the region $R_I$. Due to Poincar\'{e}-Bendixon theorem for non-differentiable vector fields (see, for instance, \cite{Melin}), inside the region enclosed by a periodic orbit there must be at least one singularity. Since $\Phi$ is null outside the first quadrant, the unique possibility is that a cycle for $\Phi$  must pass from $R_I$ to $R_{II}$ and then return to $R_I$ going around $(x_I,y_I)$. If $\sigma=\tau$, from corollary~\ref{AlphaSigmaTau} and remark~\ref{ObsChangeOfOrientation}, the ray $y=\tau x$, $ x \ge 0 $, is a stable manifold of  $(x_I,y_I)$, and hence $\Phi$ does not admit cycles. Thus, suppose that $ \sigma \neq \tau $.
Notice that the existence of a cycle implies that $\Phi$ changes its orientation with respect to the regions $R_I$ and $R_{II}$ on the ray $y=r^{-1}x$, $x \ge 0 $. From lemma~\ref{ChangeOfOrientation}, this can only happen at the origin and at the point $A:=(\hat{x},r^{-1}\hat{x})$ such that
$Q(r^{-1})\hat{x}=\rho r^{-1}-(1-\rho)$. Let then $B$ and $C$ denote, respectively, the points where the straight line $y=r^{-1}x$ intercepts the ellipse~\eqref{ellipse1} and the ellipse~\eqref{ellipse2}. Since $r^{-1}\leq\min\{\tau,\sigma\}$, we prove {\em I.} by analyzing the following cases:

\begin{description}

\item[$r^{-1}=\sigma<\tau$:] This means that $A$ coincides with the origin, and thus $\Phi$ does not change its orientation with respect to the regions $R_I$ and $R_{II}$ along the ray $y=r^{-1}x$, $ x \ge 0 $, which prevents $ \Phi $ to have a cycle.

\item[$r^{-1}<\sigma$:] From proposition \ref{VetcorFieldAtOrigin}, when $ (x, y) $ approaches the origin along the straight line $y=r^{-1}x$, the vector
$\Phi(x,y)$ tends to have the orientation of $(\rho,1-\rho)$, and thus it is pointing to inside of region $R_I$. Therefore, since at the points $B$ and $C$ the vector field $\Phi$ is also pointing to inside of $R_I$, this shows that $\Phi$ points to inside of $R_I$ along all the line segment from the origin to $C$. Otherwise, the vector field would change at least twice its orientation with respect to the regions $R_I$ and $R_{II}$, but it can only change at point $A$ (see figure~\ref{cyclesPhi}(a)). Such a configuration clearly prevents the existence of a cycle for $ \Phi $.
\end{description}
\begin{center}
\begin{figure}[ht]
\begin{center}
\begin{minipage}[b]{0.48\linewidth}
\includegraphics[width=\textwidth]{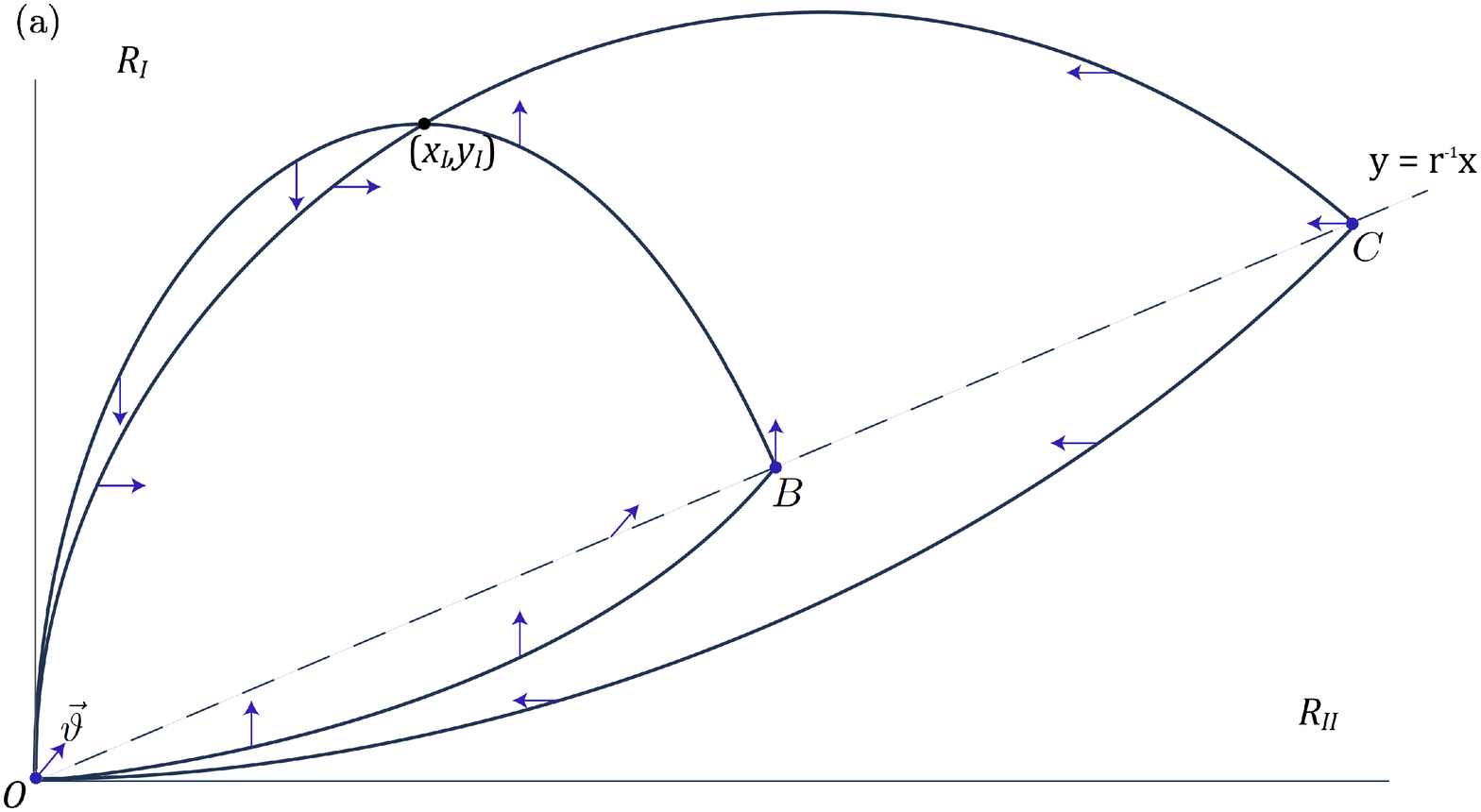}
\end{minipage}
\begin{minipage}[b]{0.48\linewidth}
\includegraphics[width=\textwidth]{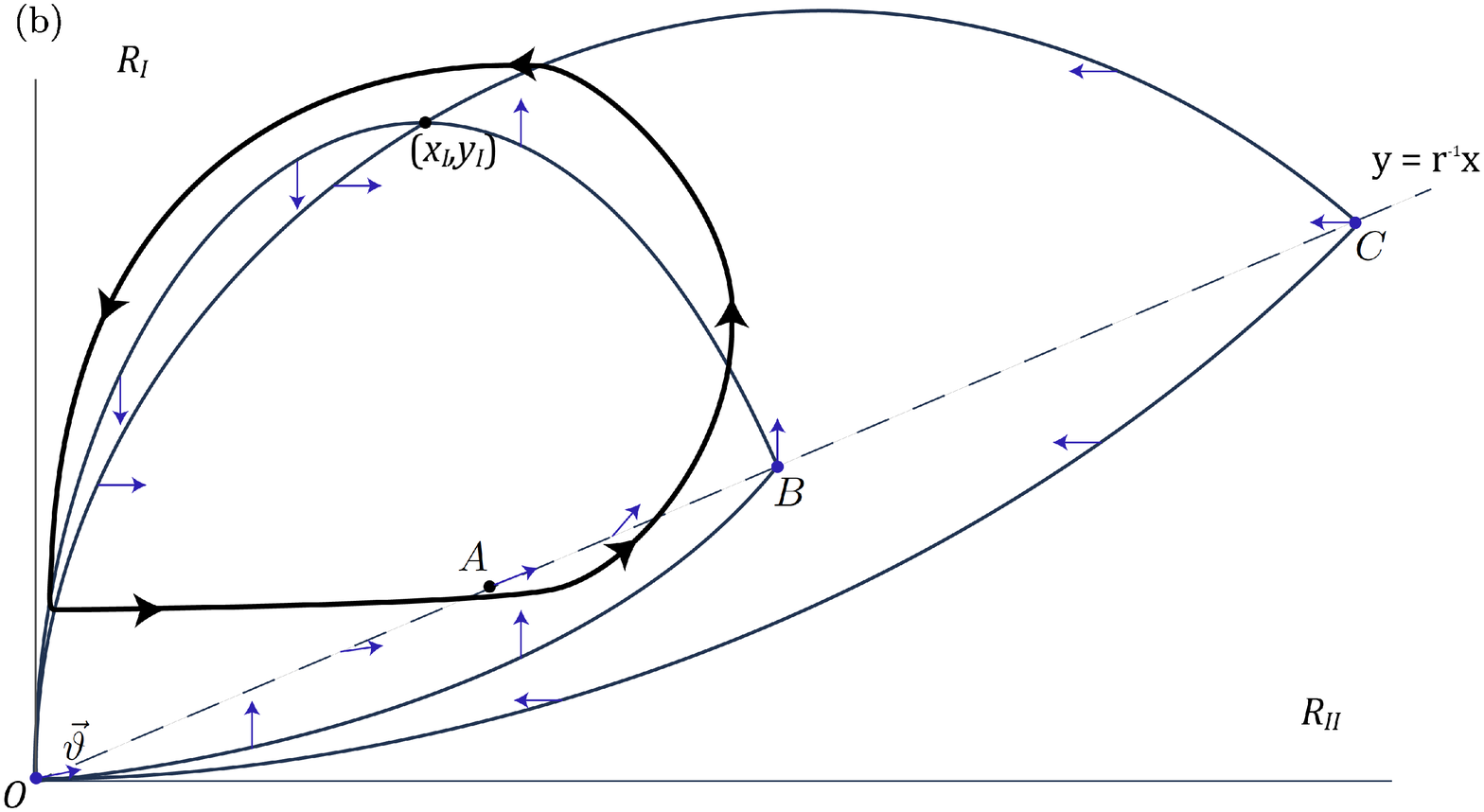}
\end{minipage}\\
\end{center}
\caption{\small{ (a) A sketch of the vector field $\Phi$ when $r^{-1}<\min\{\tau,\sigma\}$: in this case, the orbit cannot pass from $R_I$ to $R_{II}$ along the line segment from the origin to the point $ C $. (b) A sketch of the vector field $\Phi$ and of a possible cycle when $\sigma<r^{-1}<\tau$: under this condition, the point $A$ given by lemma \ref{ChangeOfOrientation} lies on the line segment from the origin to point $B$. In both pictures,
the arrow labeled as $ \vec{\vartheta}$ represents the direction of $\Phi_I$ near of origin (see proposition \ref{VetcorFieldAtOrigin})}.}\label{cyclesPhi}
\end{figure}
\end{center}

The proof of {\em II.} is analogous.
\end{proof}

\section{Final discussion}\label{sec.final}

We have considered a two-sex logistic model given by a vector field that is non-differentiable on a straight line parameterized by the average number of female sexual partners of each male, and in which the growth of each gender is negatively affected by inter-, intra- and outer-gender competitions. Adopting a generic point of view, we have shown that, in the case without inter-gender competition and with mortality rates negligible with respect to the density-dependent mortalities, the population is persistent only if the secondary sex-ratio and competition parameters satisfy specific inequalities (condition \eqref{OUOU_1}), which reflect in particular that the effects of male-male competitions will have relatively greater impact on the male population than on the female population, while the effects of female-female competitions will have relatively greater impact on the female population than on the male population. Furthermore, we have argued that the average number of male's reproductive partners could be seen as an adjustable parameter which may allow a two-sex species to find a stable equilibrium for a large set of secondary sex ratios and competition parameters.

A question that remains open is whether there exist parameters for which the flow of the vector field $\Phi$ has cycles.
Besides, it also remains open to analyze the behavior of the model with all the parameters being non-null, which should reveal a richer dynamics.

\vspace{-.1cm}
\acknowledgement{The authors thank the mathematics departments of both UNICAMP and UFSC for the hospitality during the preparation of this manuscript, and their graduate programs for the financial support. The authors thank Eduardo da Veiga Beltrame for helping us with the preparation of figure 6(b). M. Sobottka was partially supported by CNPq-Brazil grant 304813/2012-5. E. Garibaldi was partially supported by CNPq-Brazil grant 306177/2011-0.}

\vspace{-.1cm}
{\footnotesize

}

\end{document}